\newcommand{\qed}{\hfill  $\square$}
\newenvironment{proof}{\textbf{Proof: }}{$\hfill \square \\$}
\newtheorem{lemma}{Lemma}[section]
\newtheorem{theorem}{Theorem}[section]
\newtheorem{remark}{Remark}[section]
\newtheorem{condition}{Condition}
\newtheorem{step}{Step}
\newtheorem{example}{Example}
\newcommand{\E}{\mathbb E}
\newcommand{\Perp}{\perp \! \! \!  \perp}
\newcommand{\prob}{\mathbb P}
\newcommand{\eps}{\epsilon}
\newcommand{\R}{\mathbb{R}}
\newcommand{\bii}{\vec{i}}
\renewcommand{\j}{\vec{j}}
\def\T{{ \mathrm{\scriptscriptstyle T} }}
\begin{document}

\def\baselinestretch{1.2}


\title{\textsc{On Testing Independence and Goodness-of-fit in Linear Models}}
\author{\begin{tabular}{c}
Arnab Sen and Bodhisattva Sen\footnote{Supported by NSF Grants DMS-1150435 and AST-1107373}\\
University of Minnesota and Columbia University
\end{tabular}}
\date{}
\maketitle
\fontsize{12}{18pt}
\selectfont

%
%
%
%
%

\begin{abstract}
\noindent We consider a linear regression model and propose an omnibus test to simultaneously check the assumption of independence between the error and the predictor variables, and the goodness-of-fit of the parametric model. Our approach is based on testing for independence between the residual obtained from the parametric fit and the predictor using the Hilbert--Schmidt independence criterion \citep{KerIndepNIPS08}. The proposed method requires no user-defined regularization, is simple to compute, based merely on pairwise distances between points in the sample, and is consistent against all alternatives. We develop distribution theory  for the proposed test statistic, both under the null and the alternative hypotheses, and devise a bootstrap scheme to approximate its null distribution. We prove the consistency of the bootstrap scheme. A simulation study shows that our method has better power than its main competitors.  Two real datasets are analyzed to demonstrate the scope and usefulness of our method.

 \end{abstract}

\small \noindent {\bf Keywords:}   Bootstrap, goodness-of-fit test, linear regression, model checking, reproducing kernel Hilbert space, test of independence

%
%

\section{Introduction}\label{sec:intro}
In regression analysis, given a random vector $(X,Y)$ where $X$ is a $d_0$-dimensional predictor and $Y$ is the one-dimensional response, we want to study the relationship between $Y$ and $X$. In the most general form, the relationship can always be summarized as
\begin{eqnarray}\label{eq:Model}
	Y = m(X) + \eta,
\end{eqnarray}
where $m$ is the regression function and $\eta = Y - m(X)$ is the error that has conditional mean $0$ given $X$. In linear regression, we assume that $m$ belongs to a parametric class, e.g.,
\begin{equation}\label{eq:LinModel}
\mathcal{M}_\beta = \{g(x)^\T \beta: \beta \in \R^d \},
\end{equation} 
where $g(x) = (g_1(x),\ldots, g_d(x))^\T$ is the vector of known predictor functions and $\beta$ is the finite-dimensional unknown parameter.  Moreover,  for the validity of the standard theory of inference in linear models, e.g., hypothesis testing and confidence intervals,  it is crucial that the error $\eta$ does not depend on the predictor $X$. Thus to validate the adequacy of a linear model it is important to have statistical tests that can check, given independent and identically distributed  data $(X_1,Y_1), \ldots, (X_n,Y_n)$ from the regression model~\eqref{eq:Model}, the above two assumptions, namely, the correct specification of the parametric regression model and the independence of $X$ and $\eta$. 

Several tests for the goodness-of-fit of a parametric model have been proposed under different conditions on the distribution of the errors and its dependence on the predictors; see~\citet{CoxEtAl88}, \citet{Azzalini93}, \citet{Eubank90}, \citet{HM93}, \citet{Fan01}, \citet{Stute97}, \citet{Guerre05}, \citet{Christensen10} and the references therein. Most of these tests assume that the errors are homoscedastic and sometimes even normal. Also, any such test using a nonparametric regression estimator runs into the problem of choosing a number of tuning parameters, e.g., smoothing bandwidths.



Very few methods in the literature test the independence of the predictor and the error. There has been much work on testing for homoscedasticity of the errors; see, for example, \citet{CW83}, \citet{BP79}, \citet{K08} and the references therein. However,  the  dependence between $X$ and $\eta$ can go well beyond simple heteroscedasticity. In the nonparametric setup, \citet{Einmahl08B, Einmahl08} propose tests for independence but only for univariate predictors. Generalization to the multivariate case is recently considered in \citet{Neumeyer10}; also see \citet{Neumeyer09}. 

It can be difficult to test the goodness-of-fit of the parametric model and the independence of $\eta$ and $X$ separately as they often have confounding effects. Any procedure testing for the independence of $\eta$ and $X$ must assume that the model is correctly specified as $\eta$ can only be reliably estimated under this assumption. On the other hand,  many  goodness-of-fit tests crucially use the independence of $\eta$ and $X$. In this paper we propose an omnibus easy-to-implement test to simultaneously check the assumption of independence of $X$ and $\eta$, denoted by $X \Perp \eta$, and the goodness-of-fit of the linear regression model, i.e., test the null hypothesis
\begin{equation}\label{eq:H_0}
H_0: X \Perp \eta,  \ \ m \in \mathcal{M}_\beta.
\end{equation} 
Even when we consider the predictor variables fixed, our procedure can be used to check whether the conditional distribution of $\eta$ given $X$ depends on $X$. This will, in particular, help us detect heteroscedasticity. As far as we are aware, no test can simultaneously check for these two crucial model assumptions in linear regression.

Our procedure is based on testing for the independence of $X$ and the residual, obtained from fitting the parametric model,  using the Hilbert--Schmidt independence criterion \citep{KerIndepNIPS08}. Among the virtues of this test is that it is automated, that is,  requires no user-defined regularization, extremely simple to compute, based merely on the distances between points in the sample, and is consistent against all alternatives. Also, compared to other measures of dependence, the Hilbert--Schmidt independence criterion does not require smoothness assumption on the joint distribution of $X$ and $\eta$, e.g., existence of a density, and its implementation is not computationally intensive when $d_0$ is large. Moreover, this independence testing procedure also yields a novel approach to testing for the goodness-of-fit of the fitted regression model: under model mis-specification, the residuals, although uncorrelated with the predictors by definition of the least squares procedure, are very much dependent on the predictors, and the Hilbert--Schmidt independence criterion can detect this dependence; under $H_0$, the test statistic exhibits $n^{-1}$-rate of convergence, whereas, under dependence, we observe $n^{-1/2}$-rate of convergence for the centered test statistic.

We find the limiting distribution of the test statistic, under both the null and alternative hypotheses. Interestingly, the asymptotic distribution is very different from what would have been obtained if the true error $\eta$ were observed. To approximate the null distribution of the test statistic, we propose a bootstrap scheme and prove its consistency. The usual permutation test, which is used quite often in testing independence, cannot be directly used in this scenario as we do not observe $\eta$.

The paper is organized as follows: in Section~\ref{TestInd} we introduce the HSIC and discuss other measures of dependence. We formulate the problem and state our main results in Section~\ref{TestS}. A bootstrap procedure to approximate the distribution of the test statistic is developed in Section~\ref{Boots}. A finite sample study of our method along with some well-known competing procedures is presented in Section~\ref{Simul}. In Section~\ref{AppA}, Appendix A, we present a result on triangular arrays of random variables that will help us understand the limiting behavior of our test statistic under the null and alternative hypotheses, and yield the consistency of our bootstrap approach. The
proofs of the main results are given in Section~\ref{AppB}, Appendix B.

\section{Testing independence of two random vectors}\label{TestInd}
We briefly review the Hilbert--Schmidt independence criterion for testing the independence of two random vectors; see \citet{KerIndepALT05, KerIndepNIPS08} and \citet{EquivRKHS12}. We start with some background and notation. By a reproducing kernel Hilbert space  $\mathcal{F}$ of functions on a domain $\mathcal{U}$ with a positive definite kernel $k: \mathcal{U} \times \mathcal{U} \to \R$ we mean a Hilbert space of functions from $\mathcal{U}$ to $\R$ with inner product $\langle \cdot, \cdot \rangle$, satisfying the reproducing property
\begin{eqnarray*}
	\langle f, k(u,\cdot) \rangle & = & f(u), \qquad (f \in \mathcal{F}; u \in \mathcal{U}).
\end{eqnarray*}
 We say that $\mathcal{F}$ is  characteristic if and only if the map $$ P \mapsto \int_\mathcal{U} k(\cdot, u) d P(u),$$ is injective on the space of all Borel probability measures on $\mathcal{U}$ for which $\int_{\mathcal{U}} k(u,u) dP(u) < \infty$.  Likewise, let $\mathcal{G}$ be a second reproducing kernel Hilbert space on a domain $\mathcal{V}$ with positive definite kernel $l$. Let $P_{uv}$ be a Borel probability measure defined on $\mathcal{U} \times \mathcal{V}$, and let $P_u$ and $P_v$ denote the respective marginal distributions on $\mathcal{U}$ and $\mathcal{V}$. Let $(U,V) \sim P_{uv}$. Assuming that 
\begin{eqnarray}\label{eq:KernExp}
 \E\{k(U,U)\},
 \E\{l(V,V)\} <\infty,
\end{eqnarray} 
the Hilbert--Schmidt independence criterion of $P_{uv}$ is defined as 
\begin{eqnarray}\label{eq:theta}
	\theta(U, V) & = & \E \{k(U,U') l(V,V')\} +  \E\{k(U,U')\} \; \E \{ l(V,V')\}  -  2 \E \{k(U,U') l(V,V'')\},\;\;\;
\end{eqnarray}
where $(U', V'), (U'', V'')$ are independent and identically distributed copies of $(U,V)$. It is not hard to see that  $\theta(U, V) \ge 0$. More importantly, when $\mathcal{F}$ and $\mathcal{G}$ are  characteristic \citep{Lyons11,EquivRKHS12}, and \eqref{eq:KernExp} holds, then 
\[ \theta(U, V) = 0 \quad \text{ if and only if  } \quad P_{uv} = P_u \times P_v.\]

Given an independent and identically distributed sample  $(U_1,V_1), \ldots, (U_n,V_n)$ from $P_{uv}$, we want to test whether $P_{uv}$ factorizes as $P_u \times P_v$. For the purpose of testing independence, we will use a biased but computationally simpler empirical estimate of $\theta$ \citep[Definition 2]{KerIndepALT05}, obtained by replacing the unbiased $U$-statistics with the $V$-statistic
\begin{equation}\label{eq:TestV}
	\hat \theta_n =  \frac{1}{n^2} \sum_{i, j}^n k_{ij} l_{ij} + \frac{1}{n^4} \sum_{i, j, q, r}^n k_{ij} l_{qr} - 2 \frac{1}{n^3} \sum_{i, j, q}^n k_{ij} l_{iq} = \frac{1}{n^2} \mbox{trace}(KHLH),
\end{equation} 
where the summation indices denote all $t$-tuples drawn with replacement from $\{1, \ldots, n\}$, $t$ being the number of indices below the sum, $k_{ij}= k(U_i,U_j)$, and $l_{ij} = l(V_i, V_j)$, $K$ and $L$ are $n \times n$ matrix with entries $k_{ij}$  and $l_{ij}$, respectively, $H = I - n^{-1}11^\T$, and $1$ is the $n \times 1$ vector of ones. The cost of computing this statistic is $O(n^2)$; see \cite{KerIndepALT05}.

Examples of translation invariant characteristic kernel functions on $\R^p$, for $p\ge 1$, include the Gaussian radial basis function kernel $k(u, u') = \exp (-\sigma^{-2} \|u  - u'\|^2)$, $\sigma >0$, the Laplace kernel $k(u, u') = \exp (-\sigma^{-1} \|u  - u'\|)$, the inverse multiquadratics $k(u, u') = (\beta + \|u  - u'\|^2)^{-\alpha}$, $\alpha,\beta > 0$, etc. We will use the Gaussian kernel in our simulation studies and data analysis.

One can, in principle, use any other test of independence and develop a theory parallel to ours. The choice of the Hilbert--Schmidt independence criterion is motivated by a number of computational and theoretical advantages,  see, e.g., \citet{KerIndepALT05, KerIndepNIPS08}. The recently developed method of distance covariance, introduced by \cite{SzekelyRizzoBakirov07} and \cite{SzekelyRizzo09},  has received much attention in the statistical community. It tackles the problem of testing and measuring dependence between two random vectors in terms of a weighted $L_2$-distance between characteristic functions of the joint distribution of two random vectors and the product of their marginals; see \cite{EquivRKHS12} for a comparative study of the Hilbert--Schmidt independence criterion and the distance covariance methods. However, the kernel induced by the  semi-metric used in the distance covariance method \citep{EquivRKHS12} is not smooth and hence is difficult to study theoretically, at least using our techniques.

\section{Method}
\subsection{Test statistic}\label{TestS}
We consider the regression model \eqref{eq:Model}. We denote by $Z = (X, \eta) \sim P$ where $Z \in \R^{d_0} \times \R$ and $\E(\eta \mid X) = 0$. Let $P_X$ and $P_\eta$ be  the marginal distributions of $X$ and $\eta$ respectively. To start with, we will assume that $m$ does not necessarily belong to $\mathcal{M}_\beta$, as defined in \eqref{eq:LinModel}. Assuming that $ \E \{g(X) g(X)^\T\} < \infty$, $ \E \{ m(X)^2 \} <\infty$ and $\E (\eta^2) <\infty$, let us define  $D^2(\beta) = \E[\{Y - g(X)^\T \beta \}^2],$ for $\beta \in \R^d$. From the definition of $m$,  $D^2(\beta) = \E[ \{Y - m(X) \}^2] + \E[ \{m(X) - g(X)^\T \beta\}^2]$. 
The function $D^2$ is minimized at $\tilde \beta_0$ if and only if $\tilde \beta_0$ is a minimizer of $\tilde D^2(\beta) = \E[ \{m(X) - g(X)^\T \beta\}^2]. $ 
The quantity $\tilde D^2(\tilde \beta_0)$ measures the distance between the true $m$ and the hypothetical model $\mathcal{M}_\beta$. Clearly, if $m(X) = g(X)^\T \beta_0$, then $\beta_0 =\tilde \beta_0$. Under the assumption that $\E \{g(X) g(X)^\T\}$ is invertible,  $\tilde D^2( \beta)$ has the unique minimizer $$ \tilde \beta_0 = \E \{g(X) g(X)^\T\} ^{-1} \E \{m(X) g(X)\}.$$ 
Thus, $ g(x)^\T \tilde \beta_0$ is the closest function, in the least squares sense,  to $m(x)$ in $\mathcal{M}_\beta$. 

Given independent and identically distributed  data $(X_1,Y_1),  \ldots, (X_n,Y_n)$ from the regression model \eqref{eq:Model}, we compute the least squares estimator  in the class $\mathcal{M}_\beta$ as
\begin{equation}\label{eq:betahat}
\hat \beta_n = \arg \min_{\beta \in \mathbb{R}^d} \sum_{i=1}^n \big\{Y_i - g(X_i)^\T \beta\big\}^2.
\end{equation}
Then the least squares estimator $\hat \beta_n$ is  $$\hat \beta_n = A_n^{-1} \Big \{n^{-1} \sum_{i=1}^n g(X_{i}) Y_{i} \Big \},  \quad  A_n = n^{-1}\sum_{i=1}^n g(X_{i}) g(X_{i})^\T,$$ provided that $A_n$ is invertible. Let
\begin{equation}\label{eq:Residual}
	e_i = Y_i - g(X_i)^\T \hat \beta_n  \quad (i=1,\ldots, n)
\end{equation}
 be the observed residuals. The test statistic we consider is 
\begin{eqnarray}\label{eq:TestS}
	T_n =  \frac{1}{n^2} \sum_{i, j}^n k_{ij} l_{ij} + \frac{1}{n^4} \sum_{i, j, q, r}^n k_{ij}  l_{qr} -  \frac{2}{n^3} \sum_{i, j, q}^n k_{ij} l_{iq},
\end{eqnarray} 
where $k_{ij} = k(X_i,X_j)$, and $l_{ij} = l(e_i,  e_j)$ with $k$ and $l$ being characteristic kernels defined on  $\R^{d_{0}} \times \R^{d_{0}}$ and $\R \times \R$ respectively. Our test statistic is almost identical  to the empirical estimate $\hat \theta_n$ of the Hilbert--Schmidt independence criterion between $X$ and $\eta$ described in \eqref{eq:TestV} except for the fact that we replace the unobserved errors $\eta_i$ by the observed residuals $e_i$. 

\subsection{Convergence of $T_n$ under null and alternative hypotheses}
For any $u = (u_1,\ldots, u_p) \in \R^p$, we define the $\ell_\infty$-norm of $u$ as $|u|_\infty = \max_{1 \le i \le p} |u_i|$. We will assume throughout the paper that 
 \begin{condition} \label{c:I}
 $A =\E \{g(X) g(X)^\T\}$ is invertible.  
\end{condition}
Moreover, we will always assume the following conditions  on the kernels $k, l$.
 \begin{condition} \label{c:K}
 The kernels $k$ and $l$ are characteristic; $k$ is continuous and $l$ is twice continuously differentiable. Denoting the partial derivatives of $l$ as $l_x (x, y) = \partial_x l(x, y), l_{xy}(x, y)  = \partial_x \partial_y l(x, y)$, etc., we assume that $l_{xx}$, $l_{xy}$ and $l_{yy}$ are Lipschitz continuous with Lipschitz constant $L$ with respect to  the $\ell_\infty$-norm.
\end{condition} 
We study the behavior of the test statistic $T_n$ under the null hypothesis \eqref{eq:H_0}, and also under the following different scenarios:
\begin{eqnarray}\label{eq:AltHyp}
	H_1:  X  \not \Perp \eta, m \in \mathcal{M}_\beta, \quad
	H_2:  X \Perp \eta, m \notin \mathcal{M}_\beta, \quad
	H_3:  X  \not \Perp \eta, m \notin \mathcal{M}_\beta.
\end{eqnarray} 
To find the limiting distribution of $T_n$ under $H_0$, we will assume the following set of moment conditions on $X$ and $\eta$:
\begin{condition} \label{c:M}
(a) $\E \{ |g(X) |_\infty^{2}\} < \infty$;  \ \ \  (b) $\E(\eta^{2}) < \infty$; 

	 (c)  $\E \left [  k^2(X_{q}, X_{r}) \{1 + |g(X_{s})|^2_\infty \} \{1+ |g(X_{t})|^2_\infty \} \right ] < \infty$, \ \ \  $(1 \le q, r, s, t \le 4)$;
	 
	(d)   $\E  \{  f^2(\eta_{q}, \eta_{r})  \} < \infty \text{ for } f = l, l_x, l_y, l_{xx}, l_{yy}, l_{xy},$ \ $(1 \le q, r\le 2)$.
\end{condition}
\begin{theorem} \label{thm:LimDist-T_n}
Suppose that Conditions~\ref{c:I}, \ref{c:K} and \ref{c:M} hold. Then, under $H_0$, $nT_n \to \chi$ in distribution,  where $\chi$ has  a non-degenerate distribution that depends on $P = P_X \times P_\eta$ and is  denoted  by $\chi = \chi(P_X \times P_\eta)$.
\end{theorem}
\begin{figure}
\begin{center}
\includegraphics[width=9cm,height=9cm]{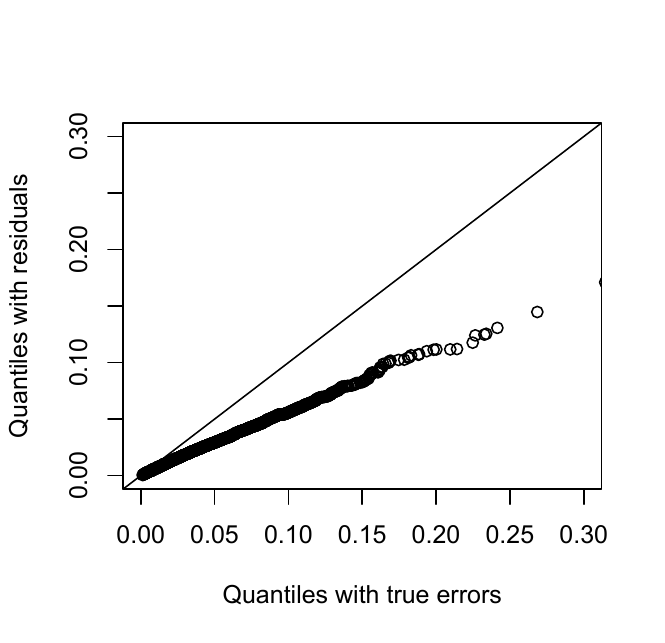}
\caption{Quantile-quantile plot of 5000 realizations of $n T_n$ obtained using the residuals versus 5000 realizations obtained using the true unknown errors in the linear model $Y = 1 + X + \eta$, where $\eta \sim N(0,\sigma^2 = 0.1)$, $X \sim N(0,1)$, $X \Perp \eta$, and $n =100$.}
\label{fig:HSICComp}
\end{center}
\end{figure}
\begin{remark}
Though one might be tempted to believe that replacing the unobserved true errors $\eta_i$ by the residuals $e_i$ should not alter the limiting distribution of the test statistic, this turns out to have an effect; see Figure 1 in the supplementary material.
\end{remark}

\begin{remark}
The random variable $\chi$ can be expressed as a quadratic function of a Gaussian field. This is in contrast with the limiting description of degenerate V-statistics, which would appear if $e_i$ were replaced by the true errors $\eta_i$,  where the limiting random variable can be described as a quadratic function of a family of independent Gaussian random variables. The explicit description of $\chi$ is slightly complicated and is described in \textsection~\ref{DescLD}; see \eqref{eq:LimDist}. However, from a practical point of view, such a description is of little use, since $P$ is unknown to the user. 
\end{remark}


Next we study the limiting behavior of our test statistic $T_n$  under the different alternatives $H_1, H_2$ and $H_3$ in \eqref{eq:AltHyp}.  We first introduce the error under model mis-specification as 
\begin{equation}\label{eq:Defeps}
\eps= m(X) - g(X) ^\T \tilde \beta_0 + \eta.
\end{equation}
If $m \in \mathcal{M}_\beta$, then $\eps \equiv \eta$. We assume the following set of moment conditions for $H_1, H_2$ and $H_3$. 
\begin{condition}\label{c:M'} Let
	(a) $\E\{|g(X) |_\infty^{2}\} < \infty$ and $\E\{m(X)^2\} < \infty$; 
	
	 (b) $\E(\eta^2) < \infty$ and $\E \{ |g(X) |_\infty^{2} \eps^2\} < \infty$;
	
	(c)  for any $1 \le q, r, s, t \le 4$,
	\[\begin{split}
	 &\text{(i) } \E \left \{  k^2(X_{q}, X_{r})l^2(\eps_s, \eps_t  ) \right \} < \infty,\\
	 &\text{(ii) } \E \left [  | k(X_{q}, X_{r})|   |\nabla l(\eps_s, \eps_t)|_\infty \big  \{|g(X_s)|_\infty +|g(X_t)|_\infty \big\} \right ] < \infty,\\
	 &\text{(iii) }  \E \left [  | k(X_{q}, X_{r})|   \{ |g(X_s)|^3_\infty +|g(X_t)|^3_\infty \} \right ] < \infty, \\
	 &\text{(iv) } \E \left [ | k(X_{q}, X_{r})|   |\mathrm{Hess}(l)(\eps_s, \eps_t)|_\infty \big\{ |g(X_{s})|^2_\infty + |g(X_{t})|^2_\infty\big\} \right]< \infty,
	 \end{split}\]
	 where  $ |\nabla l(\eps_s, \eps_t)|_\infty  =  \max \big \{| l_x(\eps_s, \eps_t  )|,  | l_y(\eps_s, \eps_t  )| \big\} $ and  $  |\mathrm{Hess}(l)(\eps_s, \eps_t)|_\infty$    $  =  \max \big\{ |l_{xx}(\eps_s, \eps_t)|, $     $  |l_{xy}(\eps_s, \eps_t)|,  |l_{yy}(\eps_s, \eps_t)| \big\}$. Here $(X_1,\eps_1), \ldots, (X_4,\eps_4)$ are independent and identically distributed copies of  $(X,\eps)$.
\end{condition}

\begin{theorem}\label{thm:LimDistAlt-T_n}
Suppose that Conditions~\ref{c:I}, \ref{c:K}  and \ref{c:M'} hold.  Assume further under $H_2$  that $m(X) - g(X) ^\T \tilde \beta_0$ is non-constant.
Then under $H_1, H_2$ or $H_3$, 
\begin{equation}\label{eq:LimDistAlt}
n^{1/2}(T_n - \theta) \to N(0, \sigma^2),
\end{equation}
in distribution, where  $\theta = \theta(X,\eps)$ is defined in \eqref{eq:theta} and  $\theta>0$. The variance $\sigma^2$ depends on the joint distribution of $(X, \eps)$; an expression for it  can be found in \eqref{eq:sigma_def}. 
\end{theorem}
\begin{remark} The parameters $\theta$ and $\sigma^2$ appearing in~\eqref{eq:LimDistAlt} depend on the joint distribution of $(X,\eps)$, and thus can be different under the three alternative hypotheses $H_1, H_2$ or $H_3$.
\end{remark}
\begin{remark}
It is of interest to investigate whether Theorems~\ref{thm:LimDist-T_n} and~\ref{thm:LimDistAlt-T_n} can be generalized to the case where $\mathcal{M}_\beta = \{m(\cdot,\beta): \beta \in \R^{d}\}$ is any smooth parametric family, and not necessarily linear as defined in~\eqref{eq:LinModel}. Our proof technique cannot be directly applied in this general framework as in this situation there is no closed form expression for $\hat \beta_n$ which complicates the theoretical analysis. However, we believe that, with assumptions analogous to those in \citet[pages 617--618]{Stute97}, our results  can be extended to general parametric models.  
\end{remark}

\subsection{Description of the limiting distributions}\label{DescLD}
To give an explicit description of $\chi(P_X \times P_\eta)$ appearing in Theorem~\ref{thm:LimDist-T_n} we need some notation, which we introduce below. Set $\zeta_n =  n^{-1/2} A ^{-1} \sum_{i=1}^n  g(X_{i}) \eta_{i}$ and define the quantities $k_{ij}$, $l_{ij}^{(p)}$ for  $p \in \{0,1,2\}$, as
\begin{eqnarray*}
k_{ij} = k(X_{i}, X_{j}), \ \   l_{ij}^{(0)}  = l_{ij} =  l(\eta_{i}, \eta_{j}),   \ \   l_{ij}^{(1)} = - \Big  \{l_x(\eta_{i}, \eta_{j}) g(X_{i}) + l_y(\eta_{i}, \eta_{j}) g(X_{j})\Big\}  \in \R^d,  \\
l_{ij}^{(2)} =   \Big \{ l_{xx}(\eta_{i}, \eta_{j})  g(X_{i}) g(X_{i})^\T+ l_{yy}(\eta_{i}, \eta_{j})  g(X_{j}) g(X_{j})^\T  +    2l_{xy}(\eta_{i}, \eta_{j})  g(X_{i}) g(X_{j})^\T \Big \} \in \R^{d \times d}.
\end{eqnarray*}
For $p \in \{0, 1, 2\}$, let $h^{(p)}$ be the   symmetric kernel  
\begin{equation}\label{def:hp}
h^{(p)}(Z_{i}, Z_{j},  Z_{q}, Z_{r}) = \frac{1}{4!} \sum_{(t, u, v, w)}^{(i, j, q, r)}  k_{tu} l_{tu}^{(p)} + k_{tu} l_{vw}^{(p)}  - 2 k_{tu}l_{tv}^{(p)}, 
\end{equation}
where $Z_i = (X_i,\eta_i)$ and the sum is  taken over all $4!$ permutations of $(i, j, q, r)$. We need the appropriate  projections of the symmetric kernel $h^{(p)}$.  Let us define 
\begin{eqnarray*}
h^{(0)}_2(z_1, z_2) =  \E  \{h^{(0)}(z_1, z_2, Z_3, Z_4) \},  \quad h_1^{(1)}(z_1) =  \E \{ h^{(1)}(z_1, Z_2, Z_3, Z_4)\}, \\ 
\Lambda  = \E \{ h^{(2)}(Z_1, Z_2, Z_3, Z_4)\}. \hspace{1.5in}
\end{eqnarray*}
The symmetric function $h^{(0)}_2$ admits a spectral decomposition $ h^{(0)}_2(z_1, z_2) = \sum_{r = 0}^\infty \lambda_r \varphi_r(z_1) \varphi_r(z_2)$
where $(\varphi_r)_{ r \ge 0}$ is an orthonormal basis of $L_2(\mathbb R^{d_0+1}, P)$. Since $h^{(0)}_2$ is degenerate of order $1$,  $\lambda_0 = 0,  \varphi_0 \equiv 1$.   Therefore, $\E\{\varphi_r(Z_1)\} = 0$ for each $r \ge 1$. Also, $\sum_{ r} \lambda_r^2  = \E \{ h^{(0)}_2(Z_1, Z_2)^2\} < \infty$. Define jointly Gaussian random variables
 \[ \mathcal{Z} = \{\mathcal{Z}_r\}_{ r \ge 1}, \quad \mathcal{N}= \{\mathcal{N}_i\}_{1 \le i \le d}, \quad \mathcal{W} = \{\mathcal{W}_i\}_{1 \le i \le d},\]
where $\mathcal{Z}_r$ are independent and identically distributed $N(0,1)$,  $\mathcal{N} \sim N_d(0, \Xi)$ and $\mathcal{W} \sim N_d(0,\sigma^2_0I)$, with $\sigma^2_0  = \E (\eta_1^2)$ and $ \Xi = \E\{h_1^{(1)}(Z_1) h_1^{(1)}(Z_1)^\T\}$.  Also, the covariance structure between the random variables  $\mathcal{Z}_r, \mathcal{N} $ and $\mathcal{W}$ is given by 
\begin{eqnarray*}
\E (\mathcal{Z}_r \mathcal{N} ) = \E \{\varphi_r(Z_1)h_1^{(1)}(Z_1)\},   \quad \E ( \mathcal{Z}_r  \mathcal{W})  =  A^{-1}\E \{g(X_1) \eta_1 \varphi_r(Z_1)\},\\
 \E (\mathcal{W} \mathcal{N}^\T  )  =  A^{-1}\E\{\eta_1 g(X_1)  h_1^{(1)}(Z_1)^\T\}. \hspace{1.0in}
\end{eqnarray*}
The limiting distribution $\chi$ is  the following quadratic function of the above Gaussian field, 
\begin{equation}\label{eq:LimDist}
 \chi(P_X \times P_\eta) = \sum_{ r=1}^\infty \lambda_r \mathcal{Z}_r^2+ \sum_{ i=1} ^d \mathcal{W}_i  \mathcal{N}_i + \frac12  \sum_{ i, j=1} ^d   \Lambda_{ij} \mathcal{W}_i \mathcal{W}_j,
\end{equation}
where $\Lambda_{ij}$ is the $(i,j)$-th entry of the matrix $\Lambda$.

Now we describe the parameters in \eqref{eq:LimDistAlt} appearing in Theorem~\ref{thm:LimDistAlt-T_n}. Note that $\theta$ is defined in \eqref{eq:theta}. Define $h^{(p)}(W_q, W_r, W_s, W_t)$ analogously as in \eqref{def:hp} where $W_1, \ldots, W_4$ are independent and identically distributed copies of $W = (X, \eps)$, and $\eps$ is defined in \eqref{eq:Defeps}. Set $h^{(0)}_1(w) =  \E \{h^{(0)}(w , W_{2},  W_{3}, W_{4})\} -  \theta(X, \eps)$ and  $\gamma  = \E \{ h^{(1)}(W_{1}, W_{2},  W_{3}, W_{4})\}$. Then
 \begin{equation}\label{eq:sigma_def}
\sigma^2 = \mathrm{var}\{h^{(0)}_1(W_1)+ \gamma^\T A^{-1} g(X_1)\eps_1\}. 
\end{equation}

\section{Consistency of the bootstrap}\label{Boots}
Theorem~\ref{thm:LimDist-T_n} is not very useful in computing the critical value of the test statistic $n T_n$, as the asymptotic distribution $\chi$ involves infinitely many nuisance parameters. An obvious alternative is use of resampling to approximate the critical value of the test. In independence testing problems, a natural choice is a permutation test; see e.g. \citet{SzekelyRizzo09}, \citet{KerIndepNIPS08}.  

However, as we are using the residuals $e_i$ instead of the true unknown errors $\eta_i$ in our test statistic, a permutation-based test  will not work.  Indeed, under the null hypothesis, the joint  distribution of $\{X_i, \eta_{\pi(i)}\}_{ 1 \le i \le n}$ remains unchanged under any permutation $\pi$ of $\{1,\ldots, n\}$, but that of $\{X_i, e_{\pi(i)}\}_{ 1 \le i \le n}$ is not invariant under $\pi$.


In this section we show that the bootstrap can be used to consistently approximate the distribution of $n T_n$, under $H_0$. In the following we describe our bootstrap procedure. 

\begin{step}
	Let $\mathbb{P}_{n,e^o}$ be the empirical distribution of centered residuals, i.e., $$e_i^o = e_i - \bar{e} \quad (i=1,\ldots, n), $$ where $e_i$ is defined in~\eqref{eq:Residual} and $\bar{e} = n^{-1} \sum_{i=1}^n e_i$.  
\end{step}
\begin{step}
	 Generate an  independent and identically distributed bootstrap sample $\{X_{in}^*, \eta_{in}^*\}_{ 1 \le i \le n}$ of size $n$ from the measure $P_n = \mathbb{P}_{n,X} \times \mathbb{P}_{n,e^o}$ where $\mathbb{P}_{n,X}$ is the empirical distribution of the observed $X_i$'s.
\end{step}
\begin{step}
	 Define $$Y_{in}^* = g(X_{in}^*)^\T \hat \beta_n + \eta_{in}^* \quad (i = 1,\ldots, n),$$  where $\hat \beta_n$ is the least squares estimator obtained in \eqref{eq:betahat}. Compute the bootstrapped least squares estimator $\hat \beta_n^*$  using the bootstrap sample $(Y_{in}^*,X_{in}^*)$. Also compute the bootstrap residuals $$e_{in}^* = Y_{in}^* - g(X_{in}^*)^\T \hat \beta_n^*\quad  (i=1,\ldots, n).$$ 
\end{step}
\begin{step}
	 Compute the bootstrap test statistic $T_n^*$, defined as in \eqref{eq:TestS}, with $X_i$ replaced by $X_{in}^*$, and $e_i$ replaced by $e_{in}^*$, for $i = 1,\ldots, n$. We approximate the distribution of $n T_n$ by the conditional distribution of $n T_n^*$, given the data.
\end{step}

%
%

Assume that we have an infinite array of random vectors $Z_1, Z_2, \ldots,$ where $Z_i = (X_i, \eta_i)$ are independent and identically distributed from $P$ defined on some  probability space $(\Omega,\mathcal{A}, \mathrm{pr})$. We denote by $\mathfrak{Z}$ the entire sequence $\left\{Z_i\right\}_{i \ge 1}$ and write $\mathrm{pr}_\omega=\mathrm{pr}(\cdot \mid \mathfrak{Z})$ and $\E_{\omega} =\E(\cdot \mid \mathfrak{Z})$ to denote conditional probability and conditional expectation, respectively, given $\mathfrak{Z}$. 

The following result shows that under $H_0$, the distribution of $n T_n^*$, given the data $\{X_i,Y_i\}_{ 1 \le i \le  n}$, almost surely,  converges to the same limiting distribution as that of $nT_n$. Thus the bootstrap procedure is strongly consistent and we can approximate the distribution function of $n T_n$ by $\prob_\omega(n T_n^* \le \cdot )$, and use it to find the one-sided cut-off for testing $H_0$. To prove the result, 
we will need similar but slightly stronger conditions than those stated in  Condition~\ref{c:M}.  Recall that $\eps = m(X) - g(X) ^\T \tilde \beta_0 + \eta$, and  set $\eps^o = \eps - \E (\eps)$.
\begin{condition} \label{c:M''}
There exists $\delta>0$ such that 

	(a) $\E \{|g(X) |_\infty^{4 + 2\delta} \} < \infty$ and $\E \{ |m(X)|^{2+\delta}\} < \infty$; \qquad (b) $\E \{ |\eta|^{2+\delta} \} < \infty$; 
	
(c)  $\E \left [  |k(X_{q}, X_{r})|^{2 + \delta} \{1 + |g(X_{s})|^{2 + \delta}_\infty \} \{1+ |g(X_{t})|^{2 + \delta}_\infty \} \right ] < \infty$ \quad $(1 \le q, r, s, t \le 4)$;

(d) for  $1 \le q, r\le 2$,  
 \begin{eqnarray*} 
	\E  \{  |l (\eps^o_{q}, \eps^o_{r})|^{2 + \delta} \} <  \infty,  \ \   \E [   \{1+|g(X_q) |^{2+\delta}_\infty \} |f (\eps^o_{q}, \eps^o_{r})|^{2 + \delta}  ]  <  \infty \quad   (f =  l_x, l_y),\\ 
	 \E [    \{1+|g(X_q) |^{2+\delta}_\infty + |g(X_q) |^{4+2\delta}_\infty \} |f (\eps^o_{q}, \eps^o_{r})|^{2 + \delta}  ] <  \infty  \quad    (f =   l_{xx}, l_{yy}, l_{xy}).
\end{eqnarray*}
\end{condition}

\begin{theorem}  \label{thm:BootsCons}
Suppose that Conditions~\ref{c:I}, \ref{c:K} and \ref{c:M''} hold. Then 
\begin{equation}\label{boot_conv}
n T_n^* \to \chi(P_X \
\times P_{\eps^o}),
\end{equation}
in distribution,  conditional on the observed data almost surely, where $\chi$ is described in~\textsection~\ref{DescLD}. As a consequence, under $H_0$, $n T_n^*$ converges to $\chi(P_X \times P_\eta)$ in distribution, conditional on the observed data almost surely.
\end{theorem}

\begin{remark}  It follows from  Theorem~\ref{thm:LimDistAlt-T_n} that  $nT_n  \to \infty $  in probability under $H_1, H_2$ or $H_3$. But by Theorem~\ref{thm:BootsCons},  the quantiles of  the conditional  distribution of $nT_n^*$ are tight.  Hence, the power of our test under $H_1, H_2$ or  $H_3$ converges to $1$ as $n \to \infty$.
\end{remark}

\begin{remark}
Since the limiting distribution $\chi(P_X \times P_{\epsilon^o})$ is a nontrivial quadratic function of certain  correlated  Gaussian random variables, it has a smooth density and  hence the convergence in \eqref{boot_conv}  implies the convergence of the $\alpha$-quantile, for any $\alpha \in (0,1)$. Therefore, using  the bootstrap distribution will yield an asymptotic level $\alpha$ test. 
\end{remark}

\begin{remark}
A natural choice for $k$ and $l$ is the Gaussian kernel.  In this case, we can take $k(u, u') = \exp(  -\sigma^{-2} \| u - u'\|^2)$ and $l(v, v') =  \exp(  - \gamma^{-2} | v - v'|^2)$ where $u, u' \in \R^{d_0}$, $v, v'\in \R$ and $\sigma$ and $\gamma$ are fixed parameters (can be taken to be $1$).  Then $k$ and $l$ satisfy Condition~\ref{c:K}. Since the Gaussian kernels are bounded with all their partial derivatives bounded,  Conditions~\ref{c:M}(d), \ref{c:M'}(c) and \ref{c:M''}(c)--(d)  are automatically satisfied for any joint distribution of $(X, \eta)$. Also, Condition~\ref{c:M}(c)  is  implied by the simpler condition $\E \{| g(X)|_\infty^4 \} < \infty$. 
\end{remark} 

\section{Simulation study and data analysis}\label{Simul}
\subsection{Models}

\begin{table}
\def~{\hphantom{0}}
\begin{center}
\caption{Percentage of times Models 1 and 2 were rejected when $\alpha = 0.05$}
{%
\begin{tabular}{llccccccc}
 \\
& $\lambda$ & 0 & 5 & 10 & 15 & 20 & 25 & 50 \\[5pt]
$n$=100 & Model 1 & 4 & 16 & 26 & 31 & 34 & 40 & 41\\
				  & Model 2 & 5 & 20 & 29 & 32 & 35 & 35 & 36 \\[5pt]
$n$=200 & Model 1 & 5 & 38 & 66 & 74 & 80 & 83 & 90\\
               & Model 2 & 6 & 47 & 62 & 67 & 69 & 72 & 76
\end{tabular}}
\label{tablelabel}
\end{center}
\end{table}
In this section we investigate the finite sample performance of the proposed testing procedure based on $T_n$, as defined in \eqref{eq:TestS}, in two different scenarios: (a) testing for the independence of the error $\eta$ and the predictor $X$, as in \eqref{eq:Model}, when the regression model is well-specified; (b) testing for the goodness-of-fit of the parametric regression model when the independence of $\eta$ and $X$ is assumed. As discussed in~\textsection~\ref{sec:intro}, there are very few methods available to test (a), and hardly any when $d_0 > 2$. For the goodness-of-fit of the parametric regression model there has been quite a lot of work and we compare our procedure with six competing methods. 

\begin{table}
\def~{\hphantom{0}}
\begin{center}
\caption{Percentage of times Model 1 was rejected when $\alpha = 0.05$,  $n =100$ and $d_0 = 2,4,6$}
{%
\begin{tabular}{llcccccccccc}
 \\
& $a$ & 0 & 0.5 & 1 & 1.5 & 2 & 3 & 4 & 5 & 7 & 10 \\[5pt]
$p$=2 & $T_n$ & 6 & 7 & 8 & 14 & 21 & 43 & 69 & 89 & 99 & 100 \\
& $S_1$ & 5 & 6 & 7 & 8 & 13 & 22 & 37 & 54 & 82 & 97 \\
& $S_2$ & 5 & 6 & 7 & 11 & 19 & 37 & 63 & 83 & 98 & 100 \\
& $F$ & 8 & 9 & 9 & 9 & 10 & 12 & 17 & 23 & 50 & 92 \\
& $G$ & 4 & 6 & 5 & 6 & 5 & 7 & 9 & 20 & 55 & 93 \\
& $S_P$ & 5 & 6 & 5 & 5 & 6 & 7 & 11 & 21 & 57 & 91 \\
& $L$ & 10 & 12 & 13 & 16 & 22 & 33 & 48 & 65 & 88 & 99\\[5pt]
$p$=4 & $T_n$ & 4 & 4 & 6 & 7 & 10 & 21 & 35 & 55 & 88 & 100 \\
& $S_1$ & 6 & 5 & 6 & 6 & 6 & 10 & 12 & 15 & 31 &46 \\
& $S_2$ & 3 & 4 & 5 & 4 & 5 & 11 & 15 & 22 & 40 & 60 \\
& $F$ & 8 & 7 &  7 & 9 &  9 & 11 & 17 & 21 & 47 & 90 \\
& $G$ & 5 & 5 & 5 & 5 & 5 & 7 &  10 & 19 & 54 & 92 \\
& $S_P$ & 5 & 7 & 6 & 6 & 7 & 7 & 11 & 21 & 54 & 91 \\
& $L$ & 21 & 26 & 30 & 31 & 35 & 46 & 57 & 70 & 91 & 99 \\[5pt]
$p$=6 &  $T_n$ & 1 & 2 & 2 & 2 & 3 & 6 & 9 & 19 & 39 & 84 \\
&  $S_1$  & 5 &  6 &  5 & 6 &  5 & 6 & 9 & 9 & 9 & 17 \\ 
&  $S_2$  & 3 & 3 &  3 &  4 & 4 & 4 &  5 & 6 & 9 & 16  \\  
&  $F$ &  6 & 7 & 7 & 7 & 8 &  9 & 12 & 18 &  42 & 86 \\ 
&  $G$ & 5 & 5 & 5 & 5 & 5 & 6 & 10 & 19 & 53 & 92 \\ 
&  $S_P$ & 6 & 7 & 7 & 7 & 6 & 8 & 12 & 22 & 53 &  89 \\ 
&  $L$ & 34 & 41 & 43 & 48 & 45 & 53 & 64 & 77 & 89 & 98
\end{tabular}}
\label{tablelabel}
\end{center}
\end{table}

We consider two data generating models. Model 1 is adapted from \citet[Model 3]{StuteEtAl98} and can be expressed as
\begin{eqnarray}\label{eq:SimMdl1}
	Y = 2 + 5 X_1 -  X_2 + a X_1 X_2 + \eta, \nonumber
\end{eqnarray}
with predictor $X = (X_1, \ldots , X_{d_0})^\T$, where $X_1, \ldots, X_{d_0}$ are independent and identically distributed Uniform$(0,1)$, and $\eta$ is drawn from an independent normal distribution with mean 0. \cite{StuteEtAl98} used $d_0 = 2$ in their simulations but we use $d_0 = 2, 4, 6$. The other model, Model 2, is adapted from \citet[Example 4]{Fan01}  and can be written as
\begin{eqnarray}\label{eq:SimMdl2}
	Y = X_1 + a X_2^2 + 2 X_4 + \eta, \nonumber
\end{eqnarray}
where $X = (X_1, X_2, X_3, X_4)^\T$ is the predictor vector. The predictors $X_1, X_2, X_3$ are normally distributed with mean $0$ and variance $1$ and pairwise correlation $0.5$. The predictor $X_4$ is binary with success probability $0.4$ and independent of $X_1, X_2$ and $X_3$. Random samples of size $n$ are drawn from Model 1 and  Model 2 and a multiple linear regression model is fitted to the samples, without the $X_1 X_2$ and $X_2^2$ terms, respectively. Thus, these models are well-specified if and only if $a = 0$.

In all the following $p$-value calculations, whenever required, we use 1000 bootstrap samples to estimate the critical values of the tests. The rejection probabilities reported in all the tables are computed using 2000 independent replicates. To make our method invariant under linear transformations we work with standardized variables. To implement our method we take Gaussian kernels with unit bandwidths.
 
\subsection{Testing for the independence}
We consider the above two models with $a=0$ and 
\begin{eqnarray*}
	\eta \mid X_1 \sim N \left(0, \frac{1 + \lambda |X_1|}{2} \right),
\end{eqnarray*}
where $\lambda = 0, 5, 10, 15, 20, 25, 50$. Table 1 gives the percentage of times Model 1, with $d_0 = 4$, and Model 2 were rejected as the sample size $n$ and $\lambda$ vary, when $\alpha = 0.05$. As expected, the power of the test increases monotonically with an increase in $\lambda$ and $n$.

\subsection{Goodness-of-fit test for parametric regression} Under the assumption of  independence of $X$ and $\eta$, our procedure can be used to test the goodness-of-fit of the fitted parametric model. In our simulation study we compare the performance of our method with six other competing methods, which we describe below. 

\begin{table}
\def~{\hphantom{0}}
\begin{center}
\caption{Percentage of times Model 2 was rejected when $\alpha = 0.05$ and $n=100$}
{%
\begin{tabular}{lccccccccccc}
$a$ & 0 & 0.05 & 0.10 & 0.15 & 0.20 & 0.25 & 0.30 & 0.35 & 0.40 & 0.50 & 0.60 \\[5pt]
$T_n$ & 5 & 6 & 8 & 13 &19 & 34 & 43 & 56 &  66 &  84  & 91 \\
$S_1$ & 8 &  6 &  7 &  9 & 10 &  18 &  24 &  31 &  41 &  58 & 68  \\
$S_2$ & 7 &  6 & 8  & 11 & 13 &  22 & 30 & 37 & 42 &  57 &  69  \\ 
$F$ & 6 & 7 & 9 & 10 & 11 & 16 & 20 & 32 & 43 & 66 & 85 \\
$G$ & 4 & 5 & 7 &  7 &  6 &  8 & 12 & 16 &  25 & 36 & 50 \\
$S_P$ & 5 & 7 & 4 & 7 & 5 & 6 & 7 & 7 & 7 & 8 &  9 \\
$L$ &  10 &  8 & 10 & 9 & 14 & 16 & 22 & 27 &  32 & 49 &  58
\end{tabular}}
\label{tablelabel}
\end{center}
\end{table}

\cite{StuteEtAl98} used the empirical process of the regressors marked by the residuals to construct various omnibus goodness-of-fit tests. Wild bootstrap approximations were used to find the critical values of the test statistics. We denote the two variant test statistics, the Kolmogorov--Smirnov type and the Cram\'{e}r--von Mises type, by $S_1$ and $S_2$, respectively. We implement these methods using the IntRegGOF library in the R package. One obvious drawback of $S_1$ and $S_2$ is that they are sensitive to the number of predictors. One possible way to reduce the effect of the dimension of the predictor is to use a test indexed by certain projections of the predictor; see the test based on $W_p$ in page 1394 of \cite{StuteEtAl06}. We also implement this test and denote it by $S_P$. As $S_P$ is based solely on one projected direction the derived test can handle more predictors but the test need not have high power against all alternatives.

\cite{Fan01} proposed a lack-of-fit test based on Fourier transforms under the assumption of independent and identically distributed Gaussian errors; also see \cite{Christensen10} for a very similar method. The main drawback of this approach is that the method needs a reliable estimator of $\mathrm{var}(\eta)$ to compute the test statistic, and it can be very difficult to obtain such an estimator under model mis-specification. 

We present the power study of the adaptive Neyman test $T_{AN,1}^*$ of~\cite{Fan01} using the known $\mathrm{var}(\eta)$ as a gold standard; see equation (2.1) of the paper. We denote this test statistic by $F$. When using an estimate of $\mathrm{var}(\eta)$, as in equation (2.10) of~\cite{Fan01}, we got very poor results. 

\cite{PS06} proposed an easy-to-implement single global procedure for testing the various assumptions of a linear model. Their test can be viewed as a Neyman smooth test and relies only on the standardized residual vector. We implemented their procedure using the gvlma library in the R package and denote it by $G$. We also implement the generalized likelihood ratio test of~\cite{FJ07}; see equation (4.24) of their paper and also~\cite{FJ05}. The test computes the likelihood ratio statistic, assuming normal errors, obtained from the parametric and nonparametric fits. As the procedure involves fitting a nonparametric model, it requires a delicate choice of smoothing bandwidths. We use the np library in the R package to compute the nonparametric kernel estimator with the optimal bandwidth being chosen by the npregbw function in that package. This procedure is similar in spirit to that used in~\cite{HM93}. To compute the critical value of the test we use the wild bootstrap method.

From Tables 2 and 3 it is clear that our procedure overall has much better finite sample performance than the competing methods. As $a$ increases, the power of our test monotonically increases to 1 in all the simulation settings. It even performs better than $F$, which uses the known $\mathrm{var}(\eta)$, in most cases.  As expected, $S_1$ and $S_2$ behave poorly as the dimension of the predictor increases, whereas $S_P$ does not show any such deterioration in performance. However, as seen from the tables, $S_P$ is slow to capture the departure from $H_0$ as $a$ increases. This is a drawback of using only one projected direction of the predictor. The method of~\cite{FJ07}, $L$, is anti-conservative, drastically violates the level condition, and hence shows higher power in some scenarios. It is also computationally expensive as it involves the choice of smoothing parameters, especially for higher dimensional predictors. 

\subsection{Real data analysis}
\begin{figure}
\begin{center}
\includegraphics[height=3.0in,width= 2.87in]{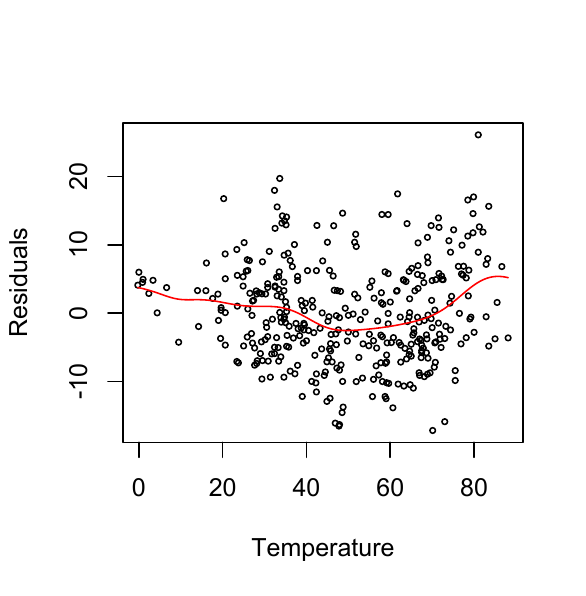}
\includegraphics[height=3.0in,width= 2.87in]{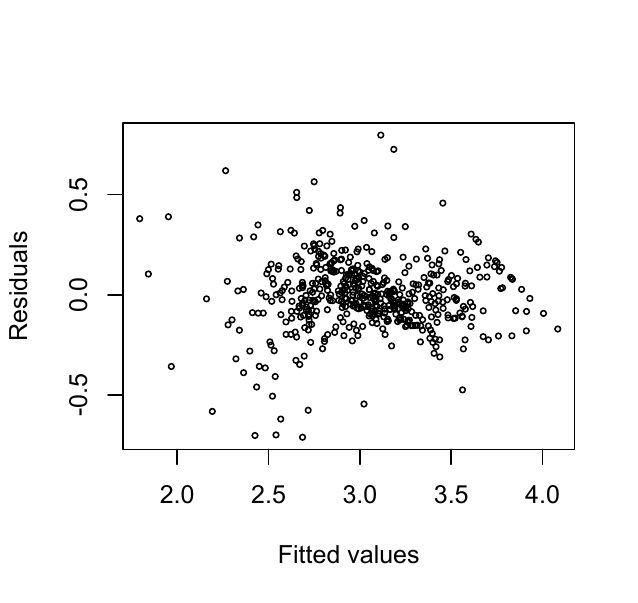}
\caption{(a) Plot of residuals against temperature for Example 1 and the corresponding  nonparametric regression. \ \ \ \ (b) Plot of residuals against fitted values for Example 2.}
\label{fig:DiagPlots}
\end{center}
\end{figure}
\begin{example} The first data set involves understanding the relation between the atmospheric ozone level and a variety of atmospheric pollutants, e.g.\ nitrogen dioxide, carbon dioxide, sulphur dioxide, etc.,  and weather conditions, including daily temperature and humidity. The data set contains daily measurements for the year 1997 on 9 variables, and is studied in \cite{Xia09}. For a complete background on the data set see the reports of the World Health Organization (2003), Bonn, Switzerland; the data set is available at http://www.ihapss.jhsph.edu/data/data.htm. As illustrated in \cite{Xia09}, the data exhibit a non-linear trend. Figure~\ref{fig:DiagPlots}(a) shows the residuals, obtained from the fit in equation (2) of \cite{Xia09}, against temperature, and clearly illustrates the dependence of the residuals on the predictor. However, neither~\cite{StuteEtAl98} nor~\cite{Fan01} reject the linear model specification at $5$\% significance level, which implies that their methods are not efficient with multiple regressors. Our procedure yields a $p$-value of $0.02$. 
\end{example}
\begin{example} We study the Boston housing data, collected by \cite{HR78} to study the effect of air pollution on real estate prices in the greater Boston area. The data consist of 506 observations on 16 variables, with each observation pertaining to one census tract. We use the version of the data that incorporates the minor corrections found by \cite{GP96}. Figure~\ref{fig:DiagPlots}(b) shows the residual plot for the model fitted by \cite{HR78}, which clearly exhibits heteroscedasticity. Our procedure yields a $p$-value of essentially 0 while the method of \cite{StuteEtAl98} yields a $p$-value of over 0.2.
\end{example}

\section{Appendix A}\label{AppA}
\subsection{A general theorem for triangular arrays}\label{ConvRes}
Instead of proving the convergence of $T_n$ under the null hypothesis and the consistency of our bootstrap procedure separately we here present a general result involving triangular arrays of random variables from which Theorems~\ref{thm:LimDist-T_n} and~\ref{thm:BootsCons} will easily follow.

We denote by $Z = (X, \eps) \sim P$ on $\R^{d_0} \times \R$. For each $n \ge 1$, we will consider a triangular array of random vectors $Z_{in} = (X_{in}, \eps_{in})$ for $i = 1,\ldots, n, $ independent and identically distributed from a distribution $P_n$ on $\R^{d_0} \times \R$. For $\beta_n \in \R^d$ define 
\begin{eqnarray}\label{eq:TriModel}
Y_{in} = g(X_{in})^\T \beta_n + \eps_{in} \quad (i = 1, \ldots, n). \nonumber
\end{eqnarray} 
We may assume that  the random vectors $Z,  Z_{in}$ for $i=1, \ldots, n$,  and $n = 1, 2, \ldots,$ are all defined on a common probability space. 


We compute an estimator $\beta_n^*$ of $\beta_n$ using the method of least squares, i.e.,
\[ \beta_n^*  = \text{argmin}_{\beta \in \mathbb{R}^d} \sum_{i=1}^n \big \{Y_{in} -  g(X_{in})^\T \beta \big\}^2 = A_n^{-1} \left\{ {n}^{-1} \sum_{i=1}^n g(X_{in}) Y_{in} \right\}, \]
where $ A_n = {n}^{-1}\sum_{i=1}^n g(X_{in}) g(X_{in})^\T$ is assumed to be invertible. Write 
\begin{eqnarray*}
\eps_{in}^* = Y_{in} - g(X_{in})^\T \beta_n^* 
\end{eqnarray*} 
for the $i$-th residual at stage $n$. We want to find the limit distribution of the statistic 
\[ T_n^* = \frac{1}{n^2} \sum_{i, j}^n k_{ij} l_{ij}^*  + \frac{1}{n^4} \sum_{i, j, q, r}^n k_{ij} l_{qr}^* -  \frac{2}{n^3} \sum_{i, j, q}^n k_{ij} l_{iq}^*,\]
where $k:\R^{d_0} \times \R^{d_0} \to \R, l:\R \times \R \to \R$ are kernels, $k_{ij}  = k(X_{in}, X_{jn}),$ and $l_{ij}^* = l(\eps_{in}^* ,\eps_{jn}^*)$. We make the following assumptions to study the limiting behavior of $T_n^*$.
\begin{condition} \label{c:C1}
Assume the following conditions on the measures $P_n$:

(a)  $X_{in} $ and $\eps_{in}$ are independent. In other words,  $P_n = P_{n,X} \times P_{n,\eps}$, for all $n$, where $P_{n,X}$ is a  measure on $\mathbb{R}^{d_0}$ and  $P_{n, \eps}$ is a measure on $\mathbb{R}$; 

(b)  $\E (\eps_{1n}) = 0 \quad (n = 1,2, \ldots)$;
	
(c) there exists a distribution $P = P_X \times P_\eps$ on $\mathbb{R}^{d_0} \times \mathbb{R}$ such that  $P_n \to P$, in distribution;

(d)  $\{X_{1n}, g(X_{1n}) \} \to \{X, g(X)\}$ in distribution,  where $X \sim P_X$.
\end{condition}
	
\begin{condition} \label{c:C2}
 The following families of random variables are uniformly integrable for any $1 \le p, q, r,s \le 4$,

		(a) $\{ | g(X_{pn}) |^2_\infty  : n \ge 1 \}$,
		
		(b)  $ \{ | \eps_{pn} | ^ 2 : n \ge 1\} $,
		
		(c) $ \big\{   k^2(X_{pn}, X_{qn})(1 + |g(X_{rn})|^2_\infty )(1+ |g(X_{sn})|^2_\infty )  : n \ge 1 \big\}$,

		(d) $\{  f^2(\eps_{pn}, \eps_{qn}): n \ge 1\}  \quad  (f = l, l_x, l_y, l_{xx}, l_{yy}, l_{xy})$. 
\end{condition}

\begin{theorem} \label{main_thm}
Suppose that Conditions~\ref{c:I}, \ref{c:K}, \ref{c:C1} and \ref{c:C2} hold. Then $nT_n^* \to \chi \equiv \chi(P_X \times P_\eps)$, in distribution,  where $\chi$ is described in \textsection~\ref{DescLD} with $\eta$ replaced by $\eps$.
\end{theorem}

\subsection{Proofs of theorems}
Theorem~\ref{thm:LimDist-T_n} is an easy consequence of Theorem~\ref{main_thm}, by taking $P_n \equiv P$ for all $n$.  Under $H_0$, $P$ is in the product form  $P_X \times P_\eta$ which implies Condition~\ref{c:C1}(a). Condition~\ref{c:C1}(b)--(d)   are also trivially satisfied. Moreover,  Condition~\ref{c:C2} is immediate from Condition~\ref{c:M}.

Next we give a sketch of the proof of Theorem~\ref{thm:LimDistAlt-T_n}. 

\begin{proof}[of Theorem~\ref{thm:LimDistAlt-T_n}]
Let  $\eps_{i} = m(X_i) - g(X_i) ^\T \tilde \beta_0 + \eta_i$. The least squares estimator $\hat \beta_n$  admits the following expansion around $\tilde \beta_0$:
\begin{equation}\label{beta_n_expansionCLT}
n^{1/2}( \hat \beta_n - \tilde \beta_0) =   \{I +o_p(1)\} n^{-1/2} \sum_{i=1}^n A^{-1}  g(X_i) \eps_i.
\end{equation}
 The normal equation for the regression model yields $\E[ g(X) \{ m(X) - g(X) ^\T \tilde \beta_{0}\}]=0.$
 Also, $\E \{ g(X) \eta \} = \E \{ g(X)\E (\eta \mid X) \} =  0$.  Hence, we have $\E \{ g(X)\eps \}=0$. So, by the central limit theorem  $n^{1/2}( \hat \beta_n - \tilde \beta_0) $ converges in distribution  to a Gaussian random  vector with mean $0$ and covariance $A^{-1} \E \{  g(X)g(X)^\T \eps^2 \} A^{-1}$.
We  expand $l_{ij} =  l(e_{i} ,e_{j}) $  around $l(\eps_i, \eps_j)$ using Taylor's theorem as 
\begin{eqnarray*}
l_{ij}  =  l(\eps_{i}, \eps_{j}) + \Big \{ (e_i - \eps_{i} ) l_x(\gamma_{ijn}, \tau_{ijn}) + (e_{j} - \eps_{j} ) l_y(\gamma_{ijn}, \tau_{ijn}) \Big\}
\end{eqnarray*}
for some point $(\gamma_{ijn},\tau_{ijn})$ on  the line joining $(e_{i}, e_j)$ and $(\eps_{i},\eps_{j})$. 
We can decompose $T_n$ as 
$$ T_n = T_{n}^{(0)} + (\hat \beta_n - \tilde \beta_{0})^\T T_{n}^{(1)} + R_n,$$ where 
\begin{eqnarray*}
T_{n}^{(p)} & = & \frac{1}{n^2} \sum_{i, j}^n k_{ij} l_{ij}^{(p)}  + \frac{1}{n^4} \sum_{i, j, q, r}^n k_{ij} l_{qr}^{(p)} - 2 \frac{1}{n^3} \sum_{i, j, q}^n k_{ij} l_{iq}^{(p)}   \quad (p=0,1), \\
l_{ij}^{(0)} & = &   l(\eps_{i}, \eps_{j}), \qquad   l_{ij}^{(1)}  =  - \Big\{ l_x(\eps_i, \eps_j) g(X_{i}) +  l_y(\eps_i, \eps_j) g(X_{j}) \Big\}.
\end{eqnarray*}
 It can be shown that $n^{1/2} R_n \to 0,$ in probability.

Thus it remains to find the limiting distribution of $T_{n}^{(0)} + (\hat \beta_n - \tilde \beta_{0})^\T T_{n}^{(1)}$. Under each of $H_1, H_2$ and $H_3$,   $X$ and $\eps$ are not independent  and hence $\theta(X, \eps)>0$ where $\theta(X, \eps)$ is the Hilbert--Schmidt independence criterion of the joint distribution $(X, \eps)$. 
Letting $W_i = (X_i, \eps_i)$,  $T_{n}^{(p)}$ can naturally be written as a $V$-statistic
\begin{equation*}
 T_{n}^{(p)} = \frac{1}{n^4} \sum_{1 \le q, r, s, t \le n} h^{(p)}(W_{q}, W_{r},  W_{s}, W_{t}) \quad (p=0,1), 
 \end{equation*}
for some symmetric kernel 
\begin{eqnarray*}
h^{(p)}(W_{q}, W_{r},  W_{s}, W_{t}) = \frac{1}{4!} \sum_{(i,j, u, v)}^{(q,r,s,t)}  k_{ij} l_{ij}^{(p)} + k_{ij} l_{uv}^{(p)}  - 2 k_{ij}l_{iu}^{(p)}, \label{eq:h^p}
\end{eqnarray*}
where the sum is taken over all $4!$ permutations of $(q,r,s,t)$. By the definition of $\theta$, 
  $\E \{h^{(0)}(W_{1}, W_{2},  W_{3}, W_{4})\} = \theta(X, \eps)$.
Thus  from standard theory of V-statistics, we obtain
\begin{equation}\label{eq:Tn0_alt}
 {n}^{1/2} \{T_n^{(0)}  - \theta(X, \eps)\}= n^{-1/2} \sum_{i=1}^n h^{(0)}_1(W_i) + o_p(1), 
 \end{equation}
where $h^{(0)}_1(w) =  \E \{h^{(0)}(w , W_{2},  W_{3}, W_{4})\} -  \theta(X, \eps)$ such that  $\E \{h^{(0)}_1(W_1)\} = 0$. 
On the other hand,  by the weak law of large numbers  for V-statistics,
\[ T_n^{(1)}  \to \gamma = \E \{ h^{(1)}(W_{1}, W_{2},  W_{3}, W_{4})\},  \]
in probability. 
From \eqref{beta_n_expansionCLT} and \eqref{eq:Tn0_alt}, 
\begin{eqnarray*}
	{n}^{1/2} \{T_n - \theta(X,\eps)\}
	& = &  n^{-1/2} \sum_{i=1}^n \left \{ h_1^{(0)}(W_i) + \gamma^\T A^{-1} g(X_i) \eps_i \right\} + o_p(1), 
\end{eqnarray*} 
which by the central limit theorem has an asymptotic  normal distribution with mean $0$ and variance $\mathrm{var} \{h^{(0)}_1(W_1)+ \gamma^\T A^{-1} g(X_1)\eps_1\}.$
\end{proof}

\section{Appendix B}\label{AppB}
This section includes the proofs of Theorems~\ref{thm:BootsCons} and~\ref{main_thm} along with the details of the proof of Theorem~\ref{thm:LimDistAlt-T_n}.

\subsection{Proof of Theorem 3}
We will apply Theorem 4 to derive the desired result by checking  that Conditions~6 and 7  hold for each $\omega \in \Omega$, outside a set of measure zero. We will apply Theorem~4 conditional on  $\mathfrak{Z}$,  and thus the probability and expectation operators in Theorem~4 are now $\mathrm{pr}_\omega$ and $\E_\omega$, respectively.  We will apply the theorem with $\eps_{in} = \eta_{in}^*, $ $X_{in} = X_{in}^*$ $ (i = 1, \ldots, n)$, and with random measures $P_n = P_{n ,X} \times P_{n, e^o}$ where,
\[ P_{n, X} = n^{-1}\sum_{i=1}^n \delta_{X_i}, \quad    P_{n, e^o} =n^{-1} \sum_{i=1}^n \delta_{e^o_i}.\] Define
\begin{equation}\label{def:eps_i}
\eps_i = m(X_i) - g(X_i)^\T \tilde \beta_{0} + \eta_i   \quad (i=1,\ldots,n).
\end{equation}
 Then $\eps_1, \ldots, \eps_n$ are independent and identically distributed. 
Let  $P_{\eps^o}$ be the distribution of  $\eps_i^o = \eps_i - \E (\eps_i)$.

Let us start by verifying  Condition~6. By definition, $P_n = P_{n,X} \times P_{n,e^o}$ is a product measure.  We take $P = P_{X} \times P_{\eps^o}$, where $P_X$ and  $P_{\eps^o}$  are the  distributions of $X_i$ and $\eps^o_i$ respectively.  By Lemma~\ref{lem:Cons_epsDist}(ii) below, almost surely,  $P_{n,e^o} \rightarrow P_{\eps^o}$, in distribution. An application of the Glivenko-Cantelli theorem yields that almost surely, $P_{n,X} \rightarrow P_X$, in distribution. Similarly, almost surely, $(X_{1n}^*, g(X_{1n}^*) )  \rightarrow  (X, g(X))$, in distribution.  Also, $\E (\eps_{1n}) = \E_\omega  (\eta_{1n}^*)= \prob_n (e - \bar{e}) = 0$.

We will  now  show that Condition~7 holds. First,   by Lemma~\ref{lem:Cons_epsDist}(iii) below, 
$$\E_\omega ({|\eta_{1n}^*|^{2+\delta}}) = \prob_n (|e^o|^{2 +\delta}) = O_\omega(1).$$  This shows Condition~7(b). Condition~7(a) holds, by assumption Condition~5(a)  and since by the strong law of large numbers,  almost surely, $$\E_\omega \{{|g(X_{1n}^*)|_\infty^{2 +\delta}} \} = \mathbb{P}_n \{|g(X)|_\infty^{2 +\delta}\} \rightarrow \E \{|g(X)|_\infty^{2 +\delta}\} < \infty.$$ To verify Condition~7(c), notice that the quantity of interest is a V-statistic. The strong law of large numbers for U-statistics along with  Condition~5(c)  implies that Condition~7(c) holds.

It remains to check Condition~7(d). Throughout the rest of proof, we will use  the notation `$a_n \lesssim b_n$'  for two positive sequences of real numbers $a_n$ and $b_n$  to mean that $a_n \le C b_n$, for all~$n$ for some  constant $C$. Consider $f = l_{xx}, l_{xy}$ or $l_{yy}$. Then, for $q \ne r$,  \[ \E_\omega \{ |f(\eta_{qn}^*,\eta_{rn}^*)|^{2 + \delta}\} = n^{-2} \sum_{i, j=1}^n |f(e_i^o, e_j^o)|^{2 + \delta}, \]
which  can be bounded by
\begin{eqnarray}
& &  2^{2 + \delta} n^{-2} \sum_{i, j=1}^n \left \{   \left | f(e_i^o, e_j^o) - f(\eps_i^o, \eps_j^o) \right |^{2 + \delta} +  \left| f(\eps_i^o, \eps_j^o) \right|^{2 + \delta} \right \}  \nonumber \\  
&\lesssim & \mathbb{P}_n (  |e^o - \eps^o|^{2 + \delta} )  + n^{-2}\sum_{i, j=1}^n  \left| f(\eps_i^o, \eps_j^o) \right|^{2 + \delta}  =   O_\omega(1). \label{eq:Bound_f}
\end{eqnarray}
In the  inequality above, we have used the Lipschitz continuity of $f$.  By  the strong law of large numbers  for V-statistics $n^{-2}\sum_{i, j=1}^n  | f(\eps_i^o, \eps_j^o) |^{2 + \delta} \to \E\left \{ | f(\eps_1^o, \eps_2^o) |^{2 + \delta} \right\}$, almost surely, 
 which holds under the moment condition  $\E \{ |f(\eps_q^o, \eps_r^o) |^{2+\delta} \} < \infty$  for  $f =l_{xx}, l_{xy}$ or $l_{yy}$ from Condition~5(d). This fact along with Lemma \ref{lem:Cons_epsDist}(i) below justifies the equality in  \eqref{eq:Bound_f}.
 
A similar analysis can be done for the case $q = r$. Indeed, $\E_\omega \{ |f(\eta_{qn}^*,\eta_{qn}^*)|^{2 + \delta}\} = n^{-1}\sum_{i=1}^n  |f(e_i^o, e_i^o)|^{2 + \delta}$ is  bounded by
\begin{eqnarray*}
& & 2^{2 + \delta}n^{-1} \sum_{i=1}^n \left \{\ \left| f(e_i^o, e_i^o) - f(\eps_i^o, \eps_i^o) \right|^{2 + \delta} +  \left| f(\eps_i^o, \eps_i^o) \right|^{2 + \delta} \right \}  \\
&\lesssim & \mathbb{P}_n (  |e^o - \eps^o|^{2 + \delta} )  + n^{-1}\sum_{i=1}^n  \left| f(\eps_i^o, \eps_i^o) \right|^{2 + \delta}  =   O_\omega(1).
\end{eqnarray*}

Now consider  $f = l_{x}$ or $l_y$. Let $a_i =|e^o_i - \eps^o_i|$ for $i=1, \ldots, n$.  Consider  the following upper bound for $|f(e^o_i , e^o_j)|$ which uses a  one term Taylor expansion for $f$ and the Lipschitz continuity of  the partial  derivatives $f_{x}$ and $f_{y}$:
\begin{equation}\label{ineq:l_x_lip_bdd}
 |f(e^o_i , e^o_j))| \le  |f(\eps_i^o, \eps_j^o)| + a_i |f_{x}(\eps_i^o, \eps_j^o)| + a_j  |f_{y}(\eps_i^o, \eps_j^o) |  + 2L (a_i+ a_j).
\end{equation}
Consequently, if  $q \ne r$,  $  \E_\omega \{ |f(\eta_{qn}^*,\eta_{rn}^*)|^{2 + \delta}\}$ is bounded from above,  up to a constant, by 
\begin{eqnarray*}
n^{-2} \sum_{i, j}  | f(\eps_i^o, \eps_j^o)|^{2+\delta} + n^{-2} \sum_{i, j} \left \{ |a_i f_{x}(\eps_i^o, \eps_j^o)|^{2+\delta} +  |a_j  f_{y}(\eps_i^o, \eps_j^o) |^{2+\delta} \right\} +  \mathbb{P}_n (|a|^{2 +\delta}).
\end{eqnarray*}
The first and the  third term are  $O_\omega(1)$ by  Condition~5(d) and Lemma~\ref{lem:Cons_epsDist}(i) below.  Further, 
\begin{align*}
 a_i  \le d |\hat \beta_n  - \tilde \beta_0|_\infty  |g(X_i)|_\infty  + |\bar e - \E(\eps) |  = O_\omega(1) \{1+ |g(X_i)|_\infty\}.
 \end{align*}
 Therefore, 
\[ n^{-2} \sum_{i, j} |a_i f_{x}(\eps_i^o, \eps_j^o)|^{2+\delta} \le   O_\omega(1)  n^{-2} \sum_{i, j} \left|  \{1+ |g(X_i)|_\infty\} f_{x}(\eps_i^o, \eps_j^o)\right|^{2+\delta}, \]
which is again $O_\omega(1)$ by  the strong law of large numbers for  V-statistics which holds under Condition~5(d).  Similarly, 
$ n^{-2} \sum_{i, j} |a_j f_{y}(\eps_i^o, \eps_j^o)|^{2+\delta}  = O_\omega(1).$
Putting these together,  we obtain  that 
$$ \E_\omega \{ |f(\eta_{qn}^*,\eta_{rn}^*)|^{2 + \delta} \}= O_\omega(1) \quad  (q \ne r).$$ A similar analysis  shows that $\E_\omega \{|f(\eta_{qn}^*,\eta_{qn}^*)|^{2 + \delta}\} = O_\omega(1)$.


For $f = l$, we can closely imitate  the above argument for $f = l_x$ or $l_y$    to deduce  that $\E_\omega \{ |f(\eta_{qn}^*,\eta_{rn}^*)|^{2 + \delta}\} = O_\omega(1)$ for any $1 \le q, r \le 2$. We just need to replace  \eqref{ineq:l_x_lip_bdd} with  the following inequality which  follows from the two-term Taylor expansion of the function $l$:
\begin{eqnarray*}
 |l(e^o_i , e^o_j))|  \le && |l(\eps_i^o, \eps_j^o)| + a_i |l_{x}(\eps_i^o, \eps_j^o)| + a_j  |l_{y}(\eps_i^o, \eps_j^o) |  + \tfrac{1}{2}  a_i^2 |l_{xx}(\eps_i^o, \eps_j^o)|  \\
&&+\; \tfrac{1}{2}  a_j^2 |l_{yy}(\eps_i^o, \eps_j^o)|  + a_i  a_j |l_{xy}(\eps_i^o, \eps_j^o)| +  4L (a_i^2+ a_j^2).
\end{eqnarray*}
We omit the routine details. Thus Condition~7(d) of  Theorem~4 holds.  This concludes the proof of  Theorem~3. \qed

\subsection{Proof of Theorem~2}
Let  $\eps_{i}$ be as defined in \eqref{def:eps_i}. The least squares estimator $\hat \beta_n$  admits the following expansion around $\tilde \beta_0$:
\begin{eqnarray}\label{beta_n_expansionCLT}
n^{1/2}( \hat \beta_n - \tilde \beta_0) & = & n^{1/2} \left \{ A_n^{-1} n^{-1} \sum_{i=1}^n g(X_{i}) Y_{i} - \tilde \beta_0 \right\}  \notag \\
& = & {n}^{1/2} \left \{  A_n^{-1} n^{-1} \sum_{i=1}^n g(X_{i}) (m(X_{i}) - g(X_{i}) ^\T \tilde \beta_{0} + \eta_{i})  \right\}, \notag\\
& = &   \{I +o_p(1)\} n^{-1/2} \sum_{i=1}^n A^{-1}  g(X_i) \eps_i,
\end{eqnarray}
where in the last step we have used the fact that $A_n \to A$, almost surely, which holds as $\E\{|g(X)|_\infty^2\} < \infty$. 
 The normal equation for the regression model is $$\E[ g(X) \{ m(X) - g(X) ^\T \tilde \beta_{0}\}]=0.$$
 Also, $\E \{g(X) \eta\} = \E \{ g(X)\E (\eta\mid X) \} =  0$.  Hence, we have $\E \{ g(X)\eps\}=0$. Moreover,  Condition~4(b), the covariance matrix $A^{-1} \E \{  g(X)g(X)^\T \eps^2\} A^{-1}$ exists. So, by the central limit theorem, ${n}^{1/2}( \hat \beta_n - \tilde \beta_0) $ converges in distribution  to a Gaussian random  vector with mean $0$ and covariance $A^{-1} \E \{ g(X)g(X)^\T \eps^2\} A^{-1}$.

We  expand $l_{ij}  =  l(e_{i} ,e_{j}) $  around $l(\eps_i, \eps_j)$ using Taylor's theorem as 
\begin{eqnarray*}
l_{ij}  =  l(\eps_{i}, \eps_{j}) + \Big \{(e_i - \eps_{i} ) l_x(\gamma_{ijn}, \tau_{ijn}) + (e_{j} - \eps_{j} ) l_y(\gamma_{ijn}, \tau_{ijn}) \Big\}
\end{eqnarray*}
where  $(\gamma_{ijn},\tau_{ijn})$ is  some point on  the line joining $(e_{i}, e_j)$ and $(\eps_{i},\eps_{j})$. 
Using
\begin{eqnarray}\label{eq:e_eps_diff} 
	e_{i} - \eps_{i} & = & -g(X_{i})^\T (\hat \beta_n - \tilde \beta_{0}), 
\end{eqnarray}
decompose $T_n$ in the following way:
$$ T_n = T_{n}^{(0)} + (\hat \beta_n - \tilde \beta_{0})^\T T_{n}^{(1)} + R_n,$$ where
\begin{eqnarray*}
T_{n}^{(p)} & = & \frac{1}{n^2} \sum_{i, j}^n k_{ij} l_{ij}^{(p)}  + \frac{1}{n^4} \sum_{i, j, q, r}^n k_{ij} l_{qr}^{(p)} -  \frac{2}{n^3} \sum_{i, j, q}^n k_{ij} l_{iq}^{(p)}  \quad (p = 0,1),
\end{eqnarray*}
 and 
\[ l_{ij}^{(0)} =  l(\eps_{i}, \eps_{j}), \qquad   l_{ij}^{(1)}  =  - \Big \{ l_x(\eps_i, \eps_j) g(X_{i}) +  l_y(\eps_i, \eps_j) g(X_{j}) \Big\}. \]
 We will first show  the negligibility of the reminder term $R_n$. More precisely, we claim that  ${n}^{1/2} R_n \to 0$, in probability. To prove the claim we need the following elementary lemma which we state without proof.
\begin{lemma}\label{lem:quadratic}
Let $f: \R^2 \to \R$ be a continuously differentiable function with its partial derivatives $f_x, f_y$ being Lipschitz continuous with Lipschitz constant $L$ with respect to $\ell_\infty$ norm. Then for any $u, v \in \R^2$,  
\[ |f(v) - f(u)| \le 2| \nabla f(u)|_\infty   |u-v|_\infty + 2L |u-v|_\infty^2. \]
\end{lemma}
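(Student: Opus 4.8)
The plan is to regard $|f(v)-f(u)|$ as a quantitative first-order Taylor remainder: write $f(v)-f(u)$ as the line integral of $\nabla f$ along the segment joining $u$ and $v$, peel off the first-order term $\nabla f(u)\cdot(v-u)$, and bound the remaining integral using the Lipschitz hypothesis on $f_x,f_y$. Since $f$ is $C^1$, applying the fundamental theorem of calculus to $t\mapsto f(u+t(v-u))$ gives
\[ f(v) - f(u) = \int_0^1 \nabla f\big(u + t(v-u)\big) \cdot (v-u)\, dt = \nabla f(u)\cdot(v-u) + \int_0^1 \big[\nabla f(u+t(v-u)) - \nabla f(u)\big]\cdot(v-u)\, dt. \]

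First I would bound the linear term. Using that $|w|_1 \le 2|w|_\infty$ for every $w\in\R^2$, we get $|\nabla f(u)\cdot(v-u)| \le |\nabla f(u)|_\infty\,|v-u|_1 \le 2|\nabla f(u)|_\infty\,|v-u|_\infty$, which is exactly the first term claimed. Next I would bound the remainder integrand: the Lipschitz property of $f_x$ and $f_y$ with respect to the $\ell_\infty$-norm gives $|\nabla f(u+t(v-u)) - \nabla f(u)|_\infty \le L\,|t(v-u)|_\infty \le L\,|v-u|_\infty$ for all $t\in[0,1]$, and a second application of the $\ell_1$–$\ell_\infty$ comparison shows the integrand is at most $2L\,|v-u|_\infty^2$ in absolute value, uniformly in $t$. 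Integrating over $[0,1]$ contributes $2L\,|v-u|_\infty^2$, and the triangle inequality combines the two pieces to yield the stated bound.

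There is essentially no obstacle here — the lemma is elementary — and the only point requiring mild care is tracking the factor $2$ that appears each time one converts between the $\ell_\infty$ and $\ell_1$ norms on $\R^2$; a slightly sharper estimate (bounding $\int_0^1 2Lt\,dt$ rather than the supremum over $t$) would even give $L\,|v-u|_\infty^2$, but the stated constant suffices for all subsequent uses. As an alternative that avoids the integral representation, one could instead split $f(v)-f(u) = [f(v_1,v_2)-f(u_1,v_2)] + [f(u_1,v_2)-f(u_1,u_2)]$, apply the one-dimensional mean value theorem to each bracket, and then use the Lipschitz continuity of the relevant partial derivative to replace the intermediate-point gradient by $\nabla f(u)$ up to an $O(L|u-v|_\infty)$ error; this produces the same inequality with constants of the same order.
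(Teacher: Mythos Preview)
Your argument is correct. The paper actually states this lemma \emph{without proof}, calling it elementary, so there is no approach to compare against; your line-integral/Taylor-remainder argument is exactly the natural proof the authors are implicitly relying on, and your bookkeeping of the $\ell_\infty$--$\ell_1$ conversion factors of $2$ is accurate (your parenthetical remark that $\int_0^1 2Lt\,dt$ would sharpen the second constant to $L$ is also correct but, as you note, unnecessary here).
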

An application of the above lemma together with \eqref{eq:e_eps_diff}  gives
\[  \begin{split}
&| l_x(\gamma_{ijn}, \tau_{ijn})    - l_x(\eps_i, \eps_j)  |_\infty  \lesssim  |\nabla l_{x}(\eps_i, \eps_j) |_\infty  | \hat \beta_n   - \tilde \beta_0|_\infty  \big \{ |g(X_{i})| _\infty + |g(X_j)|_\infty  \big\}   \\
      &  \hspace{2.3in} + | \hat \beta_n   - \tilde \beta_0|_\infty^2  \big\{ |g(X_{i})| _\infty + |g(X_j)|_\infty  \big\}^2.
\end{split} \]
Similarly, we can bound $| l_y(\gamma_{ijn}, \tau_{ijn})   - l_y(\eps_i, \eps_j)  |_\infty.$
Finally,  we can bound ${n}^{1/2}|R_n|$, up to a constant,  by 
\begin{equation} \label{eq:remainder_bdd_h13}
   {n}^{1/2}  | \hat \beta_n   - \tilde \beta_0|_\infty^2 T_n^{(2)} +  {n}^{1/2}  | \hat \beta_n   - \tilde \beta_0|_\infty^3 T_n^{(3)}, 
  \end{equation}
where, $T_n^{(2)}$ and $T_n^{(3)}$ are defined as follows:
\begin{eqnarray*}
T_{n}^{(p)} & = & n^{-4} \sum_{i, j, q , r}^n |k_{ij}|  \big( l_{ij}^{(p)}  + l_{qr}^{(p)} +  l_{iq}^{(p)} \big)  \quad  (p = 2,3),
\end{eqnarray*}
with
\begin{eqnarray*}
l_{ij}^{(2)}  =   |\mathrm{Hess}(l)(\eps_i, \eps_j)|_\infty  \big \{ |g(X_{i})|^2_\infty + |g(X_{j})|^2_\infty\big\}, \quad 
l_{ij}^{(3)} =   |g(X_{i})|^3_\infty  + |g(X_{j})|^3_\infty.
\end{eqnarray*}
Clearly, $T_n^{(2)}$ and $T_n^{(3)}$ are V-statistics whose kernels are integrable by Condition~4(c)(iii)--(iv). Consequently, the weak law of large numbers for V-statistics holds for $T_n^{(2)}$ and $T_n^{(3)}$. 
Now since ${n}^{1/2} | \hat \beta_n  - \tilde \beta_0|_\infty  = O_p(1)$, it follows that \eqref{eq:remainder_bdd_h13}  is $o_p(1)$ and the claim is established. 

Thus it remains to find the limiting distribution of $T_{n}^{(0)} + (\hat \beta_n - \tilde \beta_{0})^\T T_{n}^{(1)}$. To do that  first we will show that  $X$ and $\eps$ are not independent under each of $H_1, H_2$ and $H_3$ and hence $\theta(X, \eps)>0$ where $\theta(X, \eps)$ is the Hilbert--Schmidt independence criterion of the joint distribution $(X, \eps)$. Under hypothesis $H_1$, $X  \not \Perp \eta $ and  $\eps = \eta$. Hence $X \not \Perp \eps$ under $H_1$.   For the case $H_2$ and $H_3$ we proceed as follows. The conditional mean of $\eps$  given $X$ is
\[ \E (\eps \mid X)  =  m(X) - g(X)^\T \tilde \beta_0 + \E (\eta \mid X) = m(X) - g(X)^\T \tilde \beta_0. \]
Under $H_2$ or $H_3$, $m(X) \ne g(X)^\T \tilde \beta_0$ with positive probability. In the case when $m(X) - g(X)^\T \tilde \beta_0$ is a non-constant function of $X$,   $\E(\eps\mid X)$  depends on $X$, and hence $X$ and $\eps$ are not independent. The case $m(X)  =  g(X)^\T \tilde \beta_0 +c$ for some non-zero constant $c$  does not arise for $H_2$ by the assumption in Theorem~2. On the other hand, under $H_3$, if $m(X)  =  g(X)^\T \tilde \beta_0 +c$, then $\eps  = c+ \eta$. Thus $\eps$ and $X$ are not independent. 

Let $W_i = (X_i, \eps_i)$. Then   $T_{n}^{(p)} (p=0, 1)$ can naturally be written as a $V$-statistic: 
\begin{equation*}
 T_{n}^{(p)} = n^{-4} \sum_{q, r, s, t }^n h^{(p)}(W_{q}, W_{r},  W_{s}, W_{t}),
 \end{equation*}
for some symmetric kernel $h^{(p)}$ given by 
\begin{eqnarray*}
h^{(p)}(W_{q}, W_{r},  W_{s}, W_{t}) = \frac{1}{4!} \sum_{(i,j, u, v)}^{(q,r,s,t)}  k_{ij} l_{ij}^{(p)} + k_{ij} l_{uv}^{(p)}  - 2 k_{ij}l_{iu}^{(p)}, \label{eq:h^p}
\end{eqnarray*}
where the sum is  over all $4!$ permutations of $(q,r,s,t)$. 
Under each of the hypotheses $H_1, H_2$ or $H_3$,  $\E \{|h^{(0)}(W_{q}, W_{r},  W_{s}, W_{t})|^2\} < \infty$ for $1 \le q,r,s,t \le 4$ by Condition~4(c)(i). Also,  $\E \{h^{(0)}(W_{1}, W_{2},  W_{3}, W_{4}) \} = \theta(X, \eps)$ by the definition of $\theta$.
Thus appealing to the standard theory of V-statistics, we obtain
\begin{equation}\label{eq:Tn0_alt}
{n}^{1/2} \{T_n^{(0)}  - \theta(X, \eps)\}= n^{-1/2} \sum_{i=1}^n h^{(0)}_1(W_i) + o_p(1), 
 \end{equation}
where $h^{(0)}_1(w) =  \E \{h^{(0)}(w , W_{2},  W_{3}, W_{4})\} -  \theta(X, \eps)$. Also, $\E \{h^{(0)}_1(W_1)\} = 0$ and 
$\E \{ h^{(0)}_1(W_1) ^2\}  \le \mathrm{var}\{h^{(0)}(W_{1}, W_{2},  W_{3}, W_{4})\} < \infty$.

On the other hand, $\E\{ |h^{(1)}(W_{q}, W_{r},  W_{s}, W_{t})|_\infty\} < \infty$ for $1 \le q,r,s,t \le 4$ by  Condition~4(c)(ii)  under hypothesis $H_j$ for each $j=1, 2, 3$. So by the weak law of large numbers for V-statistics, 
\[ T_n^{(1)}  \to \gamma = \E \{ h^{(1)}(W_{1}, W_{2},  W_{3}, W_{4})\},  \]
in probability. From \eqref{beta_n_expansionCLT} and \eqref{eq:Tn0_alt}, 
\begin{eqnarray*}
	{n}^{1/2}\{T_n - \theta(X,\eps)\} & = & {n}^{1/2}\{T_{n}^{(0)} - \theta(X,\eps)\} + {n}^{1/2} (\hat \beta_n - \tilde \beta_{0})^\T T_{n}^{(1)} + o_p(1) \\
	& = &  n^{-1/2} \sum_{i=1}^n \left \{h_1^{(0)}(W_i) + \gamma^\T A^{-1} g(X_i) \eps_i \right\} + o_p(1), 
\end{eqnarray*} 
which by the central limit theorem  has an asymptotic  normal distribution with mean $0$ and variance 
\begin{equation*}\label{eq:sigma_def}
\mathrm{var}\big \{h^{(0)}_1(W_1)+ \gamma^\T A^{-1} g(X_1)\eps_1\big \}. 
\end{equation*}
This concludes the proof  of Theorem~2.


\subsection{Proof of Theorem~4}
\subsubsection{Decomposition of $T_n^*$}
Observe that
\begin{eqnarray} \label{eq:diff_resid1}
\eps_{in}^* - \eps_{in} 
 &=&  - (\beta_n^* - \beta_n)^\T g(X_{in}).
\end{eqnarray}
Using  \eqref{eq:diff_resid1} and by Taylor's expansion 
\begin{equation}\label{eq:l_ij*}
l_{ij}^* =  l_{ij}^{(0)}  +  (\beta_n^* - \beta_n)^\T l_{ij}^{(1)} + \frac1 2  (\beta_n^* - \beta_n)^\T v_{ij}^*    (\beta_n^* - \beta_n)
\end{equation}
where 
\begin{align*}
 l_{ij}^{(0)}  &= l_{ij} =  l(\eps_{in}, \eps_{jn}),   \ \ \  l_{ij}^{(1)} = - \Big \{l_x(\eps_{in}, \eps_{jn}) g(X_{in}) + l_y(\eps_{in}, \eps_{jn}) g(X_{jn})\Big \},  \;\;\;\;\;\;\;\;\;\;\;\;\;\;\;\;\;\;\;\;\;\; \\ 
v_{ij}^* &= \Big \{l_{xx}(\vartheta_{ijn}, \tau_{ijn}) g(X_{in}) g(X_{in})^\T + l_{yy}(\vartheta_{ijn}, \tau_{ijn}) g(X_{in}) g(X_{in})^\T \\
&  \;\;\;\;\;\;\;\;\;\;\;\;\;\;\;\;\;\;+\;   2 l_{xy}(\vartheta_{ijn}, \tau_{ijn}) g(X_{in}) g(X_{jn})^\T \Big\},  
\end{align*}
for some  point $(\vartheta_{ijn},  \tau_{ijn}) $ on the straight line connecting the two points $(\eps_{in}^*, \eps_{jn}^*)$ and $(\eps_{in}, \eps_{jn})$ on $\mathbb{R}^2$. 
In view of \eqref{eq:l_ij*}, we can decompose $T_n^*$ in the following way
\begin{equation}\label{eq:T_n*}
 T_n^* = T^{(0)}_{n} + (\beta_n^* - \beta_n)^\T T_{n}^{(1)} +  \frac{1}{2}(\beta_n^* - \beta_n)^\T T_{n}^{(2)} (\beta_n^* - \beta_n) + R_n,
 \end{equation}
  where
\begin{eqnarray*}
T_{n}^{(p)} =  \frac{1}{n^2} \sum_{i, j}^n k_{ij} l_{ij}^{(p)}  + \frac{1}{n^4} \sum_{i, j, q, r}^n k_{ij} l_{qr}^{(p)} -  \frac{2}{n^3} \sum_{i, j, q}^n k_{ij} l_{iq}^{(p)}  \quad (p= 0,1,2),
\end{eqnarray*}
and
\begin{eqnarray*}
l_{ij}^{(2)} & = &  \Big \{ l_{xx}(\eps_{in}, \eps_{jn})  g(X_{in}) g(X_{in})^\T+ l_{yy}(\eps_{in}, \eps_{jn})  g(X_{jn}) g(X_{jn})^\T \\
& & \qquad \qquad +    2l_{xy}(\eps_{in}, \eps_{jn})  g(X_{in}) g(X_{jn})^\T \Big\},
\end{eqnarray*}
and $R_n$ is the reminder term. Here $l_{ij}^{(0)} \in \R$, $l_{ij}^{(1)} \in \R^d$, and $l_{ij}^{(2)} \in \R^{d \times d}$.

For $p \in \{0, 1, 2\}$, $T_{n}^{(p)}$  can  be expressed as a $V$-statistic, although with triangular arrays, of the form
\begin{equation}\label{eq:Vstat_n}
 T_{n}^{(p)} = n^{-4} \sum_{ i, j, q, r}^n h^{(p)}(Z_{in}, Z_{jn},  Z_{qn}, Z_{rn}),
 \end{equation}
for some symmetric kernel $h^{(p)}$ given by 
\begin{equation}\label{def:hp}
h^{(p)}(Z_{in}, Z_{jn},  Z_{qn}, Z_{rn}) = \frac{1}{4!} \sum_{(t, u, v, w)}^{(i, j, q, r)}  k_{tu} l_{tu}^{(p)} + k_{tu} l_{vw}^{(p)}  - 2 k_{tu}l_{tv}^{(p)}, 
\end{equation}
where the sum is  over all $4!$ permutations of $(i, j, q, r)$. $\vspace{0.1in}$

\subsubsection{Getting rid of the triangular sequence}
Let $ Z_i = (X_i, \eps_i) $  be independent and identically distributed random vectors from $P$. By the Skorohod representation theorem, there exists a sufficiently rich probability space $(\widetilde \Omega, \widetilde P)$, independent random elements $\omega_1, \omega_2, \ldots $ defined on $\widetilde \Omega$   and functions $f_n, f$ with  $\tilde Z_{in}=f_n(\omega_i)$,  $\tilde Z_{i} =f(\omega_i)$ such that  $\tilde Z_{in}  =  Z_{in},$ in distribution,   $\tilde Z_{i} =  Z_i$, in distribution,  and almost surely  under $\widetilde P$,  $\tilde Z_{in} \to    \tilde Z_i,$
as $n \to \infty$.
Since we are only concerned about the distributional limit of $nT_n^*$, henceforth in this proof, we may assume, without loss of generality, that  for each $n$, 
the random vectors $W_{in} = \big(Z_{in} , Z_i \big)$ are independent and  for each $i$, $ Z_{in}  \to Z_i $ almost surely as $n \to \infty$. This argument is similar to that in \cite{LeuchtNeumann09}.

We will start by showing that 
\[A_n  = n^{-1} \sum_{i=1}^n g(X_{in}) g(X_{in})^\T  \to  A  = \E\{g(X_{1}) g(X_{1})^\T\}, \]   
in probability. By assumption Condition~7(a),  for any $1 \le p, q \le d$,  $g_p(X_{1n}) g_q(X_{1n})$ are uniformly integrable. Moreover,  by Condition~6(d), we have $g_p(X_{1n}) g_q(X_{1n}) \to g_p(X_{1})g_q(X_{1})$, in distribution. Hence,    $g_p(X_{1n}) g_q(X_{1n}) \to g_p(X_{1}) g_q(X_{1})$ in $L_1$ and  $\E \{ |g_p(X_{1}) g_q(X_{1})|\} < \infty$. Hence,  $n^{-1} \sum_{ i=1}^n g(X_i) g(X_i)^\T \to A$, in probability,  by the weak law of large numbers.  Finally,  
\[  n^{-1}  \sum_{i=1}^n g(X_{in}) g(X_{in})^\T -  n^{-1}  \sum_{i=1}^n g(X_{i}) g(X_{i})^\T \to 0   \]
in $L_1$ as $n \to \infty$  since $g_p(X_{1n}) g_q(X_{1n}) \to g_p(X_{1}) g_q(X_{1})$ in $L_1$. This completes the proof that $A_n \to A$ in probability.  As a consequence, $A_n$ is invertible, and hence $\beta_n^*$ is well defined  with high probability as $n \to \infty$.

Now $\beta_n^*$ admits the following expansion
\begin{eqnarray}\label{eq:beta_n-beta}
	n^{1/2} (\beta_n^* - \beta_n) &= & n^{-1/2} A_n ^{-1} \sum_{i=1}^n  g(X_{in}) \Big \{Y_{in} - g(X_{in})^\T \beta_n \Big\} \notag \\
	&= & n^{-1/2} A_n ^{-1} \sum_{i=1}^n  g(X_{in}) \eps_{in}.
\end{eqnarray} Next we claim that  
\begin{equation}\label{zetan}
 n^{1/2} (\beta_n^* - \beta_n)  - \zeta_n \to 0,
 \end{equation}
 in probability,  where $\zeta_n =  n^{-1/2} A ^{-1}   \sum_{i=1}^n  g(X_{i}) \eps_{i}$. We will first show that
\begin{equation}\label{zetan_diff}
n^{-1/2}A ^{-1}   \sum_{i=1}^n  g(X_{in}) \eps_{in}  - \zeta_n \to 0
\end{equation}
in $L_2$.
Clearly, it suffices to show that $n^{-1/2}  \sum_{i=1}^n  ( g_p(X_{in}) \eps_{in}  - g_p(X_{i}) \eps_{i})  \to 0$ in $L_2$ for each $1 \le p \le d$.  Indeed,  the square of its $L_2$-norm is
\begin{align*}
n^{-1}  \E \Big [  \sum_{i, j=1}^n  \big\{ g_p(X_{in}) \eps_{in}   - g_p(X_{i}) \eps_{i} \big\} \big\{ g_p(X_{jn}) \eps_{jn}   - g_p(X_{j}) \eps_{j} \big\}  \Big]  \\
=  \E \Big [  \big \{ g_p(X_{1n}) \eps_{1n}   - g_p(X_{1}) \eps_{1} \big\}^2  \Big],  
\end{align*}
which goes to $0$ as $n \to \infty$. This is because   $g_p(X_{1n}) \eps_{1n}  \to  g_p(X_{1}) \eps_{1} $ in distribution  and $ g^2_p(X_{1n}) \eps_{1n}^2$ is uniformly integrable by Conditions~7(a)--(b) and the independence of $X_{1n}$ and $ \eps_{1n}$.  This proves \eqref{zetan_diff}.
Recall that, from \eqref{eq:beta_n-beta},
\[ n^{1/2} (\beta_n^* - \beta_n) =  (A_n ^{-1} A)    n^{-1/2} A ^{-1}   \sum_{i=1}^n  g(X_{in}) \eps_{in}.\]
Since by the central limit theorem, $\zeta_n$ converges in distribution to a multivariate normal,  \eqref{zetan_diff} implies that $n^{-1/2} A ^{-1}   \sum_{i=1}^n  g(X_{in}) \eps_{in} = O_p(1)$. Consequently, 
\[ n^{1/2} (\beta_n^* - \beta_n)  -  n^{-1/2} A ^{-1}   \sum_{i=1}^n  g(X_{in}) \eps_{in} \to 0,  \]
in probability. 
Now \eqref{zetan} follows from \eqref{zetan_diff}. Let $V_n^{(p)}$, for $p  = 0, 1, 2$,  be defined analogously as $T_n^{(p)}$ in \eqref{eq:Vstat_n} but with $Z_{in} = (X_{in}, \eps_{in})$ replaced by $Z_i=(X_i, \eps_i)$. Thus $V_n^{(p)}$ is a proper V-statistic. Our next goal is to show that 
\begin{equation}\label{eq:neg_TtoV}
n^{1 - p/2}(T_n^{(p)}  - V_n^{(p)}) \to 0 \quad (p  = 0, 1, 2), 
\end{equation} 
in $L_2$. 
To show that observe that  \begin{equation*}
\E \left[n^{2 - p} \mathrm{tr} \big\{ (T_n^{(p)}  - V_n^{(p)})( T_n^{(p)}  - V_n^{(p)})^\T \big\} \right] = n^{ -(6+p)} \sum_{\bii, \j} \E [ \mathrm{tr} \big\{   \bar h^{(p)}(\bii) \bar h^{(p)}(\j)^\T \big\}],
\end{equation*}
where $\bii = (i_1, i_2, i_3, i_4)$ and $\j = (j_1, j_2, j_3, j_4)$ are multi-indices in $\{1, \ldots, n\}^4$, and
\[\bar h^{(p)}(\bii) = h^{(p)}(Z_{i_1n} , \ldots, Z_{i_4n})  - h^{(p)}(Z_{i_1} , \ldots, Z_{i_4}).\] 

Let us first show that $|h^{(p)}(Z_{i_1n} , \ldots, Z_{i_4n})|_\infty^{2}$ is uniformly integrable. It is enough to show that each of the terms like $|k_{r s} l^{(p)}_{t u}|_\infty^{2}$, where $ r,s, t, u \in \{1, 2, 3, 4\}$  may not  be necessarily distinct,  is uniformly integrable. Using the independence of $X_{in}$ and $\eps_{in}$, we see that this follows directly from Conditions~7(c)-(d).  Condition~6(d)  together with the continuous mapping theorem implies that, 
\begin{equation*}
h^{(p)}(Z_{i_1n} , \ldots, Z_{i_4n}) \to  h^{(p)}(Z_{i_1} , \ldots, Z_{i_4}),
\end{equation*}
in distribution. Thus the above convergence also holds in $L_2$ and we have that  
 \[ \E \{ |h^{(p)}(Z_{i_1} , \ldots, Z_{i_4})|_\infty^{2}\} < \infty.\]
 
Consequently, $\E \{ |\bar h^{(p)}(\bii)|_\infty^2\}$ is uniformly bounded for all $\bii$ and $n$. An application of the Cauchy-Schwarz inequality yields  
$$\E \{ |\bar h^{(p)}(\bii)|_\infty | \bar h^{(p)}(\j) |_\infty \} \le [\E \{ |\bar h^{(p)}(\bii)|^2_\infty \}]^{1/2} [\E  \{ |\bar h^{(p)}(\j)|_\infty^2\}]^{1/2},$$
implying that $\E \{ |\bar h^{(p)}(\bii) |_\infty |\bar h^{(p)}(\j) |_\infty \}$ is uniformly bounded. 
 The number of multi-indices $\bii$ and $\j$  for which $|\bii \cup \j|  = k $ is bounded above by $n^k$, for each $1 \le k \le 8$. 
 The kernel $h^{(0)}$ is degenerate of order $1$, hence  $\E \{ \bar h^{(0)}(\bii) \bar h^{(0)}(\j)\} = 0$ when $|\bii \cup \j|  =  7 \text { or } 8$.
It will be shown  in Lemma~\ref{l:kernel_h1_mean_0} below that $\E \{h^{(1)}(Z_{1n}, \ldots, Z_{4n})\} = 0$, hence if  $|\bii \cup \j|  =  8$, then $\E \{\bar h^{(1)}(\bii) \bar h^{(1)}(\j)^\T\} = \E \{ \bar h^{(1)}(\bii)\} \E\{ \bar h^{(1)}(\j)^\T\} =  0 $. Putting the above observations together, it remains to prove that 
\[   \E [ \mathrm{tr} \{ \bar h^{(p)}(\bii) \bar h^{(p)}(\j)^\T\} ] \to 0, \]
 for any $\bii, \j$  such that  $ |\bii \cup \j| = 6+p$ and  $p \in \{ 0,1,2 \}$. But this immediately follows from the fact
$\bar h^{(p)}(\bii) \to 0$ in $L_2$
which has  already been shown. Hence \eqref{eq:neg_TtoV} is proved. Finally, we  claim that 
 \begin{equation} \label{eq:triangular_removed}
   \begin{split}
n  \Big \{ T^{(0)}_{n} + (\beta_n^* - \beta_n)^\T T_{n}^{(1)} +  \frac{1}{2}(\beta_n^* - \beta_n)^\T T_{n}^{(2)} (\beta_n^* - \beta_n)  - V_n^{(0)} \\
 -  n^{-1/2}  \zeta_n^\T V_{n}^{(1)}  -   \frac{1}{2} n^{-1} \zeta_n^\T V_{n}^{(2)} \zeta_n \Big\} \to 0,
 \end{split}
 \end{equation} 
in probability, which now easily follows from \eqref{zetan_diff}  and \eqref{eq:neg_TtoV}. $\vspace{0.1in}$

\subsubsection{Negligibility of the reminder term $R_n$} In this subsection, we will show that  the reminder term  can be ignored for future analysis. More precisely, we will prove that 
\begin{equation}\label{eq:Rnto0}
nR_n \to 0,
\end{equation}
in probability.  Let us define 
$$Q_n = \frac{1}{n^2} \sum_{i, j}^n k_{ij} (v_{ij}^* - l_{ij}^{(2)})  + \frac{1}{n^4} \sum_{i, j, q, r}^n k_{ij} (v_{qr}^* - l_{qr}^{(2)}) - \frac{2}{n^3} \sum_{i, j, q}^n k_{ij} (v_{iq}^* - l_{iq}^{(2)}), $$ 
so  that $R_n = (1/2)(\beta_n^* - \beta_n)^\T Q_{n} (\beta_n^* - \beta_n)$. Since  $n^{1/2}(\beta_n^*  - \beta_n) = O_p(1)$ by \eqref{zetan},  
 it is enough to show that  for each $1 \le s, t \le d$, 
 \[ (Q_n)_{st} \to  0,\]
 in probability. 
Note that  $(Q_n)_{st}$ is a sum of three terms and each of these terms  can be shown to converge to $0$ in probability. We will only spell out the details for the first term leaving the other two terms  for the reader. Thus we need to show that  
 \begin{equation} \label{eq:remaninder_first_term}
n^{-2} \sum_{i, j}^n k_{ij} (v_{ij}^* - l_{ij}^{(2)})_{st}  \to 0,
  \end{equation} 
in probability.  The term $(v_{ij}^* - l_{ij}^{(2)})_{st}$ can be further broken down into three terms; the first one being $\big\{l_{xx}(\vartheta_{ijn}, \tau_{ijn}) - l_{xx}(\eps_{in}, \eps_{jn})\big\}g_s(X_{in})g_t(X_{in})$. The other two terms involve $l_{yy}$ and $l_{xy}$. Using the Lipschitz continuity of $l_{xx}, l_{yy}$ and $l_{xy}$ we obtain the following bound: 
\begin{eqnarray*}
|(v_{ij}^* - l_{ij}^{(2)})_{st}| &\lesssim & L|(\eps_{ij}^*,\eps_{ij}^*) -  (\eps_{in},\eps_{jn}) |_\infty \big\{ |g(X_{in})|_\infty + |g(X_{jn})|_\infty \big\}^2  \\ 
& \le & d L |\beta_n^* - \beta_n|_\infty \big\{ |g(X_{in})|_\infty + |g(X_{jn})|_\infty \big\}^3.
\end{eqnarray*}
Therefore, $ n^{-2} \sum_{i, j}^n |k_{ij}|  | (v_{ij}^* - l_{ij}^{(2)})_{st} |$ is bounded above by 
\[   \big(8d L |\beta_n^* - \beta_n|_\infty \big)  n^{-2} \sum_{i, j = 1}^n  |k_{ij}|  \big\{ |g(X_{in})|^3_\infty + |g(X_{jn})|^3_\infty \big\}.  \]
Now, by Condition~7(c), 
\[ n^{-2}  \sum_{i, j = 1}^n  \E \left [  |k_{ij}|  \big\{ |g(X_{in})|^3_\infty + |g(X_{jn})|^3_\infty \big\} \right] = O(1),\]
 and hence \eqref{eq:remaninder_first_term} follows. We can apply  similar techniques to control the other two terms in $Q_n$. Hence, $Q_n = o_p(1)$. 
\subsubsection{Finding the limiting distribution} \label{subset:limit_dist}
In this subsection, we will finally prove that $nT_n^*$ converges to a non-degenerate distribution. By \eqref{eq:T_n*}, \eqref{eq:triangular_removed} and $\eqref{eq:Rnto0}$, it is enough to show that the random variable
\[ n V_n^{(0)}  +   \zeta_n^\T  (n^{1/2} V_{n}^{(1)} )  +   \frac{1}{2}  \zeta_n^\T V_{n}^{(2)}  \zeta_n \]
converges in distribution, where $V_n^{(p)} (p = 0, 1, 2)$ is defined near \eqref{eq:neg_TtoV}. The kernel  $h^{(0)}$ is degenerate of order $1$, i.e., $\E \{h^{(0)}(z_1, Z_2, Z_3, Z_4) \} = 0 $ almost surely.  Define 
\[ h^{(0)}_2(z_1, z_2) =  \E \{ h^{(0)}(z_1, z_2, Z_3, Z_4) \} \]
and let $S_n^{(0)}$ be the V-statistic with kernel $ h^{(0)}_2$, i.e.,
\[ S_n^{(0)}  = n^{-2} \sum_{i, j=1}^n h^{(0)}_2(Z_i, Z_j).  \]
By the standard theory of V-statistics, 
\[ n(V_n^{(0)} - S_n^{(0)}) \to 0,\]
in probability. 
The symmetric function $h^{(0)}_2$ admits an eigenvalue decomposition 
\[ h^{(0)}_2(z_1, z_2) = \sum_{r = 0}^\infty \lambda_r \varphi_r(z_1) \varphi_r(z_2)\]
where $(\varphi_r)_{ r \ge 0}$ is an orthonormal basis of $L_2(\mathbb R^{d_0+1}, P)$ and  $\lambda_r$ is the eigenvalue corresponding to the eigenfunction $\varphi_r$. Since $h^{(0)}_2$ is degenerate of order $1$,  $\lambda_0 = 0,  \varphi_0 \equiv 1$.   Therefore, $\E\{\varphi_r(Z_1)\} = 0$ for each $r \ge 1$. Also, $\sum_{ r} \lambda_r^2  = \E \{ h^{(0)}_2(Z_1, Z_2)^2\} < \infty$.  We use the above decomposition of $h^{(0)}_2$ to express $S_n^{(0)}$  as 
\[ S_n^{(0)} =  \sum_{ r=1}^\infty \lambda_r \Big\{ n^{-1/2} \sum_{i=1}^n \varphi_r(Z_i) \Big\}^2.\]
Let us now turn our attention to $V_n^{(1)}$. It is again a V-statistic whose  kernel $h^{(1)}$ has mean zero, i.e., $\E \{ h^{(1)}(Z_1, Z_2, Z_3, Z_4)\} = 0$. See Lemma~\ref{l:kernel_h1_mean_0} below for a proof.  Therefore, if we define its first order projection by 
\[ h_1^{(1)}(z_1) =  \E \{ h^{(1)}(z_1, Z_2, Z_3, Z_4)\}, \]
then 
\[  n^{1/2} V_n^{(1)} -  n^{-1/2} \sum_{i=1}^n h_1^{(1)}(Z_i)  \to 0,\]
in probability.  On the other hand, by the weak law of large numbers for V-statistics, we have 
\[  V_n^{(2)}  \to \E \{ h^{(2)}(Z_1, Z_2, Z_3, Z_4)\} = \Lambda \in \mathbb R^{d^2},\]
in probability. 
 By the multivariate central limit theorem, the random vectors 
 \[  \Big\{ n^{-1/2} \sum_{i=1}^n \varphi_r(Z_i)\Big\}_{ r \ge 1},  \  n^{-1/2} \sum_{i=1}^n h_1^{(1)}(Z_i), \  \zeta_n, \] converge in distribution  to  jointly Gaussian random variables
 \[ \mathcal{Z} = (\mathcal{Z}_r)_{ r \ge 1}, \mathcal{N}= (\mathcal{N}_i)_{1 \le i \le d}, \mathcal{W} = (\mathcal{W}_i)_{1 \le i \le d},\]
where $\mathcal{Z}_r$ are independent and identically distributed  $N(0,1)$, and  the random vectors $\mathcal{N}$ and  $\mathcal{W}$ are distributed as  $N_d(0, \Xi)$ and $N_d\big(0,  \alpha I\big)$ respectively with $\alpha =  \E (\eps_1^2) $ and $ \Xi = \E \{ h_1^{(1)}(Z_1) h_1^{(1)}(Z_1)^\T\}$.  Also, the covariance structure between the random variables  $\mathcal{Z}_r, \mathcal{N} $ and $\mathcal{W}$ are given by 
\begin{align*} \E(\mathcal{Z}_r \mathcal{N} ) = \E\{\varphi_r(Z_1)h_1^{(1)}(Z_1)\},& \quad
\E ( \mathcal{Z}_r  \mathcal{W}) =  A^{-1}\E \{g(X_1) \eps_1 \varphi_r(Z_1)\},\\
 \E (\mathcal{W} \mathcal{N}^\T ) &=  A^{-1}\E \{\eps_1 g(X_1)  h_1^{(1)}(Z_1)^\T\}.
\end{align*}
Therefore, by the continuous mapping theorem,
\begin{eqnarray}
n T_n^* & = & n V_n^{(0)}  +   \zeta_n^\T  (n^{1/2} V_{n}^{(1)} )  +   \frac{1}{2}  \zeta_n^\T V_{n}^{(2)}  \zeta_n + o_p(1)  \nonumber \\
	& \to & \sum_{ r=1}^\infty \lambda_r \mathcal{Z}_r^2+ \sum_{ i=1} ^d \mathcal{W}_i  \mathcal{N}_i + \frac12  \sum_{ i, j=1} ^d \mathcal{W}_i  \Lambda_{ij} \mathcal{W}_j = \chi, \label{eq:LimDist}
\end{eqnarray}
in distribution, which concludes the proof of the theorem.

\begin{lemma} \label{l:kernel_h1_mean_0}
Let $h^{(1)}$ be the symmetric kernel as defined in \eqref{def:hp}. Let $Z_1, Z_2, Z_3$ and $Z_4$ be independent and identically distributed random vectors with 
$Z_i = (X_i, \eps_i) \in \mathbb R^{d_0} \times \mathbb R$ where $X_i$ and $\eps_i$ are independent.  Then 
\[ \E\{h^{(1)}(Z_{1}, \ldots, Z_{4})\} = 0 .\]
\end{lemma}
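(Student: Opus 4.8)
The plan is to show that the symmetrized kernel $h^{(1)}$ has mean zero by computing the expectation of its unsymmetrized ``building block'' terms $k_{tu}l_{tu}^{(1)}$, $k_{tu}l_{vw}^{(1)}$ and $k_{tu}l_{tv}^{(1)}$ over i.i.d.\ $Z_1,Z_2,Z_3,Z_4$ with $X_i \Perp \eps_i$, and then observe that they cancel in the combination $k_{tu}l_{tu}^{(1)} + k_{tu}l_{vw}^{(1)} - 2k_{tu}l_{tv}^{(1)}$ exactly as in the definition of $\theta$. Recall $l_{ij}^{(1)} = -\big[l_x(\eps_i,\eps_j)g(X_i) + l_y(\eps_i,\eps_j)g(X_j)\big]$, an $\R^d$-valued quantity, so all the expectations below are vectors in $\R^d$.

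First I would exploit the independence $X_i \Perp \eps_i$ and the i.i.d.\ structure to factor expectations. The crucial input is the scalar identity coming from integration by parts / the mean-zero property of $\eps$: since $\E[\eps_i \mid X_i] = 0$ and $\eps$ has mean zero, for an i.i.d.\ copy structure one gets identities like $\E[l_x(\eps_1,\eps_2)\mid \eps_2] $ relating to derivatives, but more directly the key observation is that $g(X_1)$ and the pair $(\eps_1,\eps_2)$ and $X_2$ are mutually independent in the relevant groupings, so each term of the form $\E[k(X_a,X_b)\, l_x(\eps_c,\eps_d)\, g(X_e)]$ factors into a product of expectations over disjoint blocks of indices. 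The three distinct index-pattern classes that arise after symmetrization — (i) the ``diagonal'' term where the $k$ and $l$ share both indices, (ii) the fully ``split'' term where $k$ and $l$ involve four distinct indices, (iii) the ``partially shared'' term where $k$ and $l$ share exactly one index — produce, upon taking expectations, precisely three vector-valued quantities $a_1, a_2, a_3 \in \R^d$ for which the combination $a_1 + a_2 - 2a_3$ vanishes, mirroring the $\theta \ge 0$ algebra in \eqref{eq:theta}. Concretely, after integrating out the ``free'' index the split term gives $\E[k(X,X')]\,\E[l_x(\eps,\eps')]\,\E[g(X'')]$-type products, and one checks these coincide across the three patterns because of the symmetry $l_x(x,y) = l_y(y,x)$ and the fact that $\E[g(X)\, l_x(\eps,\eps')] $ reappears with the same value in each pattern.

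I would organize the computation by writing $\E[h^{(1)}(Z_1,Z_2,Z_3,Z_4)]$ as $\tfrac{1}{4!}\sum_{(t,u,v,w)}$ of the expectation of $k_{tu}l_{tu}^{(1)} + k_{tu}l_{vw}^{(1)} - 2k_{tu}l_{tv}^{(1)}$, note that by the i.i.d.\ assumption the expectation of each summand depends only on the partition pattern of $(t,u,v,w)$ and not on the particular permutation, so it suffices to evaluate the three representative expectations $\E[k_{12}l_{12}^{(1)}]$, $\E[k_{12}l_{34}^{(1)}]$, and $\E[k_{12}l_{13}^{(1)}]$, and then form $\E[k_{12}l_{12}^{(1)}] + \E[k_{12}l_{34}^{(1)}] - 2\E[k_{12}l_{13}^{(1)}]$. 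Expanding $l^{(1)}$ into its $l_x$ and $l_y$ halves and using $X_i \Perp \eps_i$ plus the i.i.d.\ property to factor each, I expect every one of these three to reduce to the \emph{same} $\R^d$-vector (or to pair up and cancel), so the combination is $0$. The finiteness of all the expectations involved is guaranteed by assumption (C2) (in particular (C2.c)–(C2.d), which bound $k^2$ times growth factors in $g$ and bound $l_x^2, l_y^2$), so Fubini applies throughout.

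The main obstacle is purely bookkeeping: correctly enumerating the index-overlap patterns of $(t,u,v,w)$ across the three terms $k_{tu}l_{tu}^{(1)}$, $k_{tu}l_{vw}^{(1)}$, $k_{tu}l_{tv}^{(1)}$ (which is why the statement restricts to four i.i.d.\ copies — the patterns that occur with four indices are exactly the ones needed), and then verifying the cancellation identity for the $\R^d$-valued pieces rather than scalars, keeping track of which $g(X_\bullet)$ factor is attached. One has to be careful that in the term $k_{tu}l_{tv}^{(1)} = -k_{tu}[l_x(\eps_t,\eps_v)g(X_t) + l_y(\eps_t,\eps_v)g(X_v)]$ the two halves have their $g$-factor on \emph{different} indices, one of which is shared with $k$ and one of which is not; matching these up correctly against the diagonal and split terms is where the argument must be done with care, but no genuine analytic difficulty arises beyond the integrability already assumed.
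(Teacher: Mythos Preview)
Your plan is correct and is essentially the paper's own argument: reduce to the three representative expectations $\E[k_{12}l_{12}^{(1)}]$, $\E[k_{12}l_{34}^{(1)}]$, $\E[k_{12}l_{13}^{(1)}]$, factor each using $X_i\Perp\eps_i$, and cancel in the combination via the symmetries $k(x,y)=k(y,x)$ and $l_x(a,b)=l_y(b,a)$. Two small corrections to make when you write it up: the mean-zero property $\E[\eps]=0$ (and anything like integration by parts) plays \emph{no} role in this lemma, so drop that remark; and the three vectors $a_1,a_2,a_3$ are \emph{not} generally equal --- with $c=\E[l_x(\eps_1,\eps_2)]$, $\alpha=\E[k(X_1,X_2)g(X_1)]$, $\beta=\E[k(X_1,X_2)]\,\E[g(X_1)]$ one finds $a_1=-2c\alpha$, $a_2=-2c\beta$, $a_3=-c(\alpha+\beta)$, so the identity $a_1+a_2-2a_3=0$ is your ``pair up and cancel'' option, not the ``same vector'' one.
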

\begin{proof}
We have
\[ h^{(1)}(Z_{1}, Z_{2},  Z_{3}, Z_{4}) = \frac{1}{4!} \sum_{(t, u, v, w)}^{(1, 2, 3, 4)}  k_{tu} l_{tu}^{(1)} + k_{tu} l_{vw}^{(1)}  - 2 k_{tu}l_{tv}^{(1)}, \]
where the sum is  over all $4!$ permutations of $(1,2,3,4)$.  
The lemma would follow immediately if
$\E (k_{tu} l_{tu}^{(1)} + k_{tu} l_{vw}^{(1)}  -  2k_{tu}l_{tv}^{(1)} ) = 0$ for each such permutation. Recall that
\[ l_{ij}^{(1)}  =   -  \Big \{  l_x(\eps_{i}, \eps_{j}) g(X_{i}) +  l_y(\eps_{i}, \eps_{j}) g(X_{j}) \Big \}. \]
Express the  right hand side of the above equation as $Q_{ij} + R_{ij}$.  Using the independence of $X_i$ and $\eps_{i}$, 
\[ \begin{split}
& \E ( k_{tu} Q_{tu} + k_{tu} Q_{vw}  -  2k_{tu}Q_{tv} )  \\
=&  - \E \{ k(X_{t}, X_{u})g(X_{t})\}\E \{l_x( \eps_{t}, \eps_{u})\} -  \E\{k(X_{t}, X_{u})g(X_{v})\}\E\{l_x( \eps_{v}, \eps_{w})\} \\
& \hspace{1in} +  2 \ \E \{k(X_{t}, X_{u})g(X_{t})\}\E \{ l_x( \eps_{t}, \eps_{v})\}  \\
=& \ \E \{l_x( \eps_{1}, \eps_{2})\}  \big [\E \{k(X_{1}, X_{2})g(X_{1}) \} - \E \{k(X_{1}, X_{2})g(X_{3})\}    \big].
\end{split}\]
Similarly, 
\[ \begin{split}
&  \E (k_{tu} R_{tu} + k_{tu} R_{vw}  -  2k_{tu}R_{tv})  \\
=&  - \E \{k(X_{t}, X_{u})g(X_{u})\}\E\{l_y( \eps_{t}, \eps_{u})\} -  \E \{ k(X_{t}, X_{u})g(X_{w})\}\E\{l_y( \eps_{v}, \eps_{w})\} \\
&  \hspace{1in}  +  2\ \E\{k(X_{t}, X_{u})g(X_{v})\}\E\{l_y( \eps_{t}, \eps_{v})\}  \\
=& \  \E \{l_y( \eps_{1}, \eps_{2})\}  \big[\E\{k(X_{1}, X_{2})g(X_{3})\} - \E\{k(X_{1}, X_{2})g(X_{2})\}    \big].
\end{split}\]
Since $k$ is symmetric, $\E \{k(X_{1}, X_{2})g(X_{2})\}  = \E \{ k(X_{1}, X_{2})g(X_{1})\} $ and since $l$ is symmetric,  $l_x(a, b)  = l_y(b, a)$ which implies that $ \E \{l_x( \eps_{1}, \eps_{2})\} = \E \{l_y( \eps_{1}, \eps_{2})\} $. Hence,
\[  \E (k_{tu} Q_{tu} + k_{tu} Q_{vw}  -  2k_{tu}Q_{tv} ) + \E ( k_{tu} R_{tu} + k_{tu} R_{vw}  -  2k_{tu}R_{tv} ) = 0,\]
and consequently, $\E (k_{tu} l_{tu}^{(1)} + k_{tu} l_{vw}^{(1)}  -  2k_{tu}l_{tv}^{(1)} ) =0$. 
\end{proof}


\subsection{The empirical distribution of the residuals}
In the following lemma we gather a few standard results about the  empirical distribution of the residuals for the linear regression model $Y = m(X) + \eta$.\begin{lemma}\label{lem:Cons_epsDist}
Under Conditions~1 and 5(a)--(b), the following statements hold:

{\em (i)} for  each  $0 < r \le 4+ 2 \delta, $ $\mathbb{P}_n  ( |e^o - \eps^o|^r) \to 0$,   almost surely; 

{\em (ii)} almost surely,  $P_{n,e^o} \rightarrow \eps^o,$ in distribution;

{\em (iii)} almost surely, $\sup_{n}  \prob_n ( |e^o|^{2 +\delta})  < \infty$. 
 \end{lemma}
\begin{proof}
Write $e_i - \eps_i =  - g(X_i) ^\T (\hat \beta_n  - \tilde \beta_0)$. Thus, 
\[\mathbb{P}_n  (|e - \epsilon|^r )  \le   d |\hat \beta_n - \tilde \beta_0|^r_{\infty}    \mathbb{P}_n \{ |g(X)|_\infty^r\}.\]
Hence, almost surely, $\mathbb{P}_n  (|e - \epsilon|^r ) \to 0$ using the facts that $\E  \{ |g(X)|^{4+ 2 \delta}_\infty\} < \infty$  by Condition~5(a)  and that $ \hat \beta_n  \to \tilde \beta_0 $ almost surely,  by \eqref{beta_n_expansionCLT} and $\E \{ | g(X) \eps|\} < \infty$, the latter being guaranteed by Conditions~5(a)--(b). Therefore,  almost surely, 
\begin{equation*}\label{eq:e_bar_mu}
\;\;\;\;\;\;\;\; \bar{e} = \mathbb{P}_n (e) = \mathbb{P}_n \{m(X)\} -  \mathbb{P}_n\{g(X)\}^\T   \hat \beta_n \to \E \{m(X)\} -  \E\{g(X)\}^\T  \tilde \beta_{0}= \E(\eps). 
\end{equation*}
This completes the proof of (i).

Let $P_{n,\eps^o}$  be the empirical measure of $\epsilon_1^o, \ldots,\epsilon_n^o$. Its characteristic function is
\[ \int e^{i\xi x}d P_{n,e^o}(x) = e^{-i\xi\bar{e}} \mathbb{P}_n (e^{i\xi {e}}).\]
Hence, by applying part (i) of the lemma with $r = 1$, for any $\xi\in\mathbb{R}$, 
\begin{eqnarray}
 \left|\int e^{i\xi x}d P_{n,e^o} (x) -  e^{-i\xi \{\bar{e} -  \E(\eps)\}}\int e^{i\xi x}d P_{n,\eps^o}(x) \right| &=& \left|\mathbb{P}_n ({e^{i\xi {e}}}) -\mathbb{P}_n (e^{i\xi\epsilon})\right| \nonumber\\
 &\leq& |\xi|\ \mathbb{P}_n(|e - \epsilon|) \to 0,\nonumber
\end{eqnarray}
almost surely. Now by the Glivenko-Cantelli lemma  almost surely, $P_{n,\eps^o} \to \eps^o$  in distribution. Next   $\bar e \to \E (\eps)$, again almost surely,  as shown in   part~(i) of the lemma. Therefore,   $ \int e^{i\xi x}d P_{n,e^o}(x) \to \int e^{i\xi x}d P_{\eps^o} (x)$, almost surely,  which, by the L\'evy's continuity theorem,  yields (ii).

\noindent To prove (iii), we write 
\begin{eqnarray*}
\prob_n ( |e^o|^{2 +\delta}) & = & \mathbb{P}_n (|e - \bar{e}|^{2 + \delta}) = \mathbb{P}_n \big[|(e - \eps) + \{\eps -\E(\eps)\} - \{\bar{e} - \E(\eps)\}|^{2 + \delta}\big] \\
	& \le& 3^{2+\delta} \Big\{  \mathbb{P}_n (|e - \eps|^{2 + \delta}) + \mathbb{P}_n (|\eps^o|^{2 + \delta})  + |\bar{e} - \E(\eps)|^{2 + \delta} \Big\}.
\end{eqnarray*}
The result is then an immediate consequence of the  fact  that $ \mathbb{P}_n  (|\eps^o|^{2 + \delta})  \to \E ( |\epsilon^o|^{2 + \delta}) <\infty$ almost surely  by Conditions~5(a)--(b), that  $\bar e \to \E (\eps)$ almost surely,  and part (i) of the lemma. 
\end{proof}

\section{Acknowledgment} The authors thank Probal Chaudhuri, Victor de la Pe{\~n}a, Bharath Sriperumbudur and  G\'{a}bor Sz\'{e}kely for helpful discussions. We thank the reviewers for their helpful comments.

\bibliography{cond_ind}

\end{document}